\renewcommand{\d}{\mathrm{d}}
\newcommand{\E}{\mathbb{E}}
\DeclareMathOperator*{\plim}{\mathrm{plim}}
\numberwithin{equation}{section}
\numberwithin{figure}{section}
\theoremstyle{plain}
\newtheorem{theorem}{Theorem}[section]
\newtheorem{proposition}[theorem]{Proposition}
\theoremstyle{definition}
\newtheorem{definition}[theorem]{Definition}
\newtheorem{example}[theorem]{Example}
\title{On the decomposition of an insurer's profits and losses}
\author[1]{Marcus C.~Christiansen}
\affil[1]{\footnotesize Institut f{\"u}r Mathematik, Carl von Ossietzky Universit{\"a}t Oldenburg,  Carl-von-Ossietzky-Stra{\ss}e 9--11, DE-26129 Oldenburg, Germany.}
\date{\today}
\begin{document}

\maketitle


\begin{abstract}
 Current reporting standards for insurers require a decomposition of observed profits and losses in such a way that changes in the insurer's balance sheet can be attributed to specified risk factors.
  Generating such a decomposition is  a nontrivial task because  balance sheets generally depend on the risk factors in a non-linear way. This paper starts from an axiomatic perspective on profit and loss decompositions and finds that the axioms  necessarily lead to infinitesimal sequential updating (ISU) decompositions, provided that the latter exist and are stable, whereas  the current practice is rather to use sequential updating (SU) decompositions.    The generality of the axiomatic approach makes the results useful also beyond insurance applications wherever profits and losses shall be additively decomposed in a risk-oriented manner.
 \end{abstract}

Keywords: IFRS 17; Solvency II; change analysis; surplus decomposition;  conditional risk measure

\section{Introduction}
Profits and losses that emerge in an insurer's balance sheet between two reporting dates can stem from various sources. The  international financial reporting standard IFRS 17 of the  International Accounting Standards Board (2017),  the MCEV reporting principles of the CFO Forum (2016) as well as the insurance regulation of the European Union (2015) require a change analysis of the insurer's own funds (or solvency reserves)
that identifies and quantifies the sources of the observed profits and losses.
Alternative names for  change analysis are profit and loss attribution,  analysis of movement, or variation analysis. The implementation of these guidelines by insurance companies is a non-trivial task, cf.~Candland \& Lotz (2014) and Bashford \& Dubischar (2020).

This paper starts from an axiomatic perspective on profit and loss decompositions,  calling for additivity, normalization and stability of decompositions. Additive decompositions are easy to interpret and allow the insurer to redistribute profits and losses among different parties.  Normalization means that no profits and losses are attributed to a risk factor that had no updates in the last reporting period. Stability describes the property of a decomposition to be continuous with respect to time lags or time quantization in the empirical risk factor data. 
Stability ensures that these perturbations in the empirical observations  do not change a decomposition too much.

We take a time-dynamic perspective here where profit and loss decomposition means that we are decomposing a discounted surplus process into a sum of partial discounted surplus processes that each  uniquely refer to specified time-dynamic risk factors. In the special case of a one-period model, the problem simplifies to decomposing a surplus random variable to a sum of partial surplus random variables. The latter kind of problem is discussed at length in Schilling et al.~(2020), who introduce the notion of  `meaningful risk decompositions' by a list of six axioms. Schilling et al.~(2020) show that none of the decompositions suggested in the literature before are meaningful risk decompositions and come up with a new and meaningful decomposition principle that bases on the martingale representation theorem (MRT).  While their axioms are time-static in nature, the MRT decomposition is actually a time-dynamic decomposition of a discounted surplus process. Unfortunately, the MRT decomposition concept is limited to cases where the insurer's discounted surplus process is a martingale. So, unless the insurer's balance sheet valuations are purely based on conditional expectations, the MRT decomposition principle is not applicable for  profit and loss attribution.

Among practitioners, a popular method for profit and loss attribution is a sequential updating of the insurer's balance sheet by iterating the balance sheet calculations and  layering the changes of each risk factor. This sequential updating (SU) decomposition principle produces additive decompositions and is frequently used in the economics literature, but unfortunately it depends on the update order of the risk factors, cf.~Fortin et al.~(2011) and Biewen (2014). The so-called one-at-a-time method avoids that problem by resetting after each update step all parameters to previous year's status, but the resulting decomposition is not additive anymore, cf.~Shorrocks (2013) and Schilling et al.~(2020). In the context of surplus redistribution in with-profit life insurance, Jetses \& Christiansen (2021) recently introduced an infinitesimal sequential updating (ISU) decomposition principle that eliminates the disadvantages of  sequential updating by decreasing the length of the reporting periods down to zero so that the impact of the update order vanishes. More specifically,
 ISU decompositions are defined as limits of SU decompositions for  update interval lengths going to zero. While the asymptotic theory can be technically challenging, in insurance practice ISU decompositions are   easy to implement  since by construction they can be simply approximated by SU decompositions.

The central finding of this paper is that our axiomatic approach necessarily leads to  ISU decompositions, provided that the latter exist and are stable.  So, this paper finally promotes the same decomposition principle as Jetses \& Christiansen (2021), but motivation and justification are different.
We look at three examples that illustrate the results of this paper. The examples are kept simple in order to make the fundamental mechanism easily comprehensible. The analysis of more complex examples is left for future research.

The paper is structured as follows.
The second section defines the profit and loss attribution problem and sets the mathematical frame for profit and loss decompositions. The third section introduces the axiomatic perspective. In the fourth section we recall the SU and ISU decomposition principles. The key message of the paper is in section five, where we show that our axioms lead to ISU decompositions.
 Section six calculates a couple of examples, where the technical proofs are outsourced to section seven. The eighth and last section draws a conclusion.

\section{The insurer's surplus process}\label{SectionSurplusProcess}

This section bases on Christiansen \& Jetses (2021) but takes a more general perspective beyond surplus decompositions in with-profit life insurance.
  We generally assume that we have a complete probability space $(\Omega , \mathcal{A},\mathbb{P})$   with a 
  right-continuous, complete filtration $\mathbb{F}=(\mathcal{F}_t)_{t \geq 0}$.
Let $\d C (t)  $ be the insurer's cash flow at time $t$.
We use the convention that outgoing payments get a negative sign and incoming payments get a  positive sign, looking from the perspective of the insurer. Let $\d \Phi (t) $ be the return on investment of the insurer's investment portfolio at time $t$.
Assuming that  $C$ is a finite variation semimartingale,
the asset value process $A$ of the insurer is a semimartingale with the dynamics
\begin{align*}
  \d A(t)  =  A(t-) \, \d \Phi(t) + \d C(t).
\end{align*}
By defining  $\kappa$  as the value process of a self-financing portfolio with investment return $\Phi$ and with a starting value of one, i.e.~$\kappa$ is the solution of
\begin{align*}
  \d \kappa (t) = \kappa(t-) \, \d \Phi(t),\quad \kappa(0)=1,
\end{align*}
 the solution of the stochastic differential equation of the asset value process can be represented as
\begin{align*}
  A(t) =  \kappa(t) \, A(0)    +  \int_{(0,t]} \frac{\kappa(t)}{\kappa(s)} \, \d C(s),
\end{align*}
provided that $\kappa$ is strictly positive.
In order to see that, apply integration by parts on the latter formula.
In the hypothetical case that the  insurer knew the future, the liabilities at time $t$ would be likewise calculated as
\begin{align*}
  L(t) =  \int_{(t,\infty)} \frac{\kappa(t)}{\kappa(s)} \, \d (-C)(s).
\end{align*}
The difference between assets and the hypothetical liabilities determines the value of the insurer's hypothetical own funds,
\begin{align}\label{ALformula}
A(t) - L(t) = \kappa(t) \,( A(0) - L(0)).
\end{align}
The time-dynamic changes of the insurer's  own funds constitute the insurer's profits and losses. In the hypothetical setting  \eqref{ALformula} the profits and losses are solely investment returns earned on the own funds themselves. However, as $A(0)-L(0)$  depends on the future and is nowhere adapted to the currently available information, the insurer actually replaces $A(0)-L(0)$  at each time $t$ with an $\mathcal{F}_t$-measurable proxy $R(t)$.  We call
$$R=(R(t))_{t \geq 0}$$
the \emph{revaluation process}, since it continuously revaluates $A(0)-L(0)$ as the currently available information $\mathcal{F}_t$ increases with time $t$. So in the insurer's balance sheet calculations, the own funds  are actually  of the form
\begin{align}\label{SKappaR}
    S(t) &= \kappa(t) R(t).
\end{align}
We  call $S=(S(t))_{t \geq 0}$ the \emph{surplus process}.
The dynamics of $S$ is driven by two factors,  first the dynamics of $\kappa$, describing investment gains earned on the surplus itself, and second the dynamics of $R$, representing revaluation profits and losses that arise from the continuous revaluation of $A(0)-L(0)$.

The central aim of this paper is to decompose $R$ with respect to different risk sources. Let $\mathbb{D}$ denote the set of adapted  processes on $[0,\infty)$  whose paths are right-continuous and have left limits. We assume that  the stochastic balance sheet model of the insurer rests on a so-called \emph{risk basis}
$$X=(X_1 , \ldots, X_m)\in \mathbb{D}^{d_1 \times \cdots \times d_m} ,$$
  which is a multivariate adapted process composed of so-called \emph{risk factors} $X_1, \ldots, X_m$ such that
  \begin{align}\label{RisXadapted}
    \sigma(R (t)) \subseteq \sigma(X^t), \quad t \geq 0.
  \end{align}
The stopped process $X^t=(X_1^t, \ldots, X_m^t)$ is defined by
\begin{align*}
X_i^t(s) = \mathds{1}_{s \leq t}\, X_i(s) + \mathds{1}_{s > t}\, X_i(t)
\end{align*}
and represents the information that the risk basis $X$ provides at time $t$.
By allowing the risk factors $X_1, \ldots, X_m$ to be multivariate processes with varying dimensions $d_1, \ldots, d_m \in \mathbb{N}$, we have flexibility for grouping of risk factors, where necessary.
Because of \eqref{RisXadapted},  the time-$t$  proxy $R(t)$ of $A(0)-L(0)$ can be seen as a mapping
\begin{align*}
 (t,X^t) \mapsto R(t)
\end{align*}
that assigns to the information $X^t$ at time $t$ the random variable $R(t)$.  Let $L_0$ denote  the set of all real random variables.  This paper generally assumes that there even exists a mapping
$\varrho : \mathcal{D}  \rightarrow L_0$ with $\{X^t : t \geq 0 \} \subseteq \mathcal{D}$ such that
\begin{align}\label{ConstructionR}
 \varrho(X^t)=  R(t), \quad t \geq 0.
\end{align}
By assuming here that the time parameter $t$ itself is not an argument of $\varrho$ but appears only as stopping time parameter in $X^t$, we make sure that the dynamics of $R(t)$ can be completely explained from the increase of information in the $t$-stopped risk basis $X^t$.
\begin{example} In case that the insurer uses risk-neutral valuation in an arbitrage-free financial market, the mapping $\varrho$ takes the form
\begin{align}\label{RhoGleichEQ}
R(t)= \varrho(X^t) = \E^{\mathbb{Q}}  \big[ A(0)-L(0)  \,\big| \, \sigma(X^t) 
\big],
\end{align}
where $\mathbb{Q}$ is a risk-neutral measure, see for example Sheldon \& Smith (2004) for MCEV calculations. In the classical life insurance literature, $\mathbb{Q}$ is rather a conservative valuation measure that represents a technical valuation basis, cf.~Norberg (1999).
\end{example}
\begin{example}
The conditional expectation in \eqref{RhoGleichEQ} is an example of a so-called \emph{conditional risk measure} $\gamma$,
which is a mapping
that assigns to a financial claim $\xi$  a  risk value $\gamma[\xi| \mathcal{G}]$ that is measurable with respect to the current information $\mathcal{G} \subseteq \mathcal{A}$. Popular examples are the Conditional Value at Risk and  the Conditional Average Value at Risk, cf.~F\"{o}llmer \& Schied (2016, section 11).
 Based on such a conditional risk measure $\gamma$, the revaluation process $R$ may be defined by
\begin{align}\label{RhoAsCondRiskMeas}
 R(t) = \varrho(X^t) = \gamma\big[A(0)-L(0)\,\big|\,\sigma(X^t)
 \big].
\end{align}
In the special case where $\gamma$ satisfies the axioms of conditional cash invariance and conditional positive homogeneity, by using \eqref{ALformula}  the right hand side of \eqref{RhoAsCondRiskMeas} can be transformed to
\begin{align*}
   \frac{1}{\kappa(t)}A(t)- \frac{1}{\kappa(t)}\gamma\big[  L(t)  \,\big| \,\sigma(X^t) 
   \big] ,
\end{align*}
provided that $\kappa$ is adapted to the natural filtration of $X$.
The latter equation means that it is only the liabilities that make it necessary to use the proxy $R(t)$ in the insurer's balance sheet.
\end{example}
\begin{example}
In classical with-profit life insurance, there typically exists a deterministic operator $F$ such that
\begin{align}\label{RhoExp3}
  R(t)(\omega)= \varrho(X^t)(\omega) =   F(X^t(\omega)), \quad \omega \in \Omega,
\end{align}
provided  that we define $X$ suitably,
 see Jetses \& Christiansen (2021). 
\end{example}
The central aim of this paper is to decompose $R(t) =\varrho(X^t)$ as
\begin{align}\label{GeneralAdditDecomp}
  R(t) = R(0) + D_1(t) + \cdots + D_m(t), \quad t \geq 0,
\end{align}
where $D_1, \ldots, D_m$ are adapted processes with $D_i(0)=0$ and such that $D_i(t)$ describes the contribution of  risk factor $X_i$ to the dynamics of $R$. The first addend $R(0)$ represents the initial own funds, which are not decomposed here.
Equation \eqref{GeneralAdditDecomp} is equivalent to the additive decomposition
\begin{align}\label{totalSurplus2}
  S(t) = \kappa(t) \, S(0) +  \kappa(t)\, D_1(t)+ \cdots + \kappa(t)\, D_m(t), \quad t \geq 0,
\end{align}
for the surplus process $S$. The first addend $\kappa(t) \, S(0) $ represents the time-$t$ value of the initial own funds $S(0)= R(0)$. The remaining addends $\kappa(t)\, D_1(t),  \ldots ,\kappa(t)\, D_m(t)$ describe the time-$t$ values of the contributions that the risk factors $X_1, \ldots, X_m$ make to the dynamics of $S$.  
\section{An axiomatic perspective on profit and loss decompositions}
 In this section we postulate three desirable properties for the decomposition $D=(D_1,\ldots, D_m)$ of $R=(\varrho(X^t))_{t \geq 0}$. Our first axiom is \emph{additivity}, which was already postulated in    \eqref{GeneralAdditDecomp}. Additive decompositions are intuitively interpretable and make it possible to redistribute profits and losses among different parties.
  Our second axiom is  \emph{normalization}. Recall that our construction \eqref{ConstructionR} of the revaluation process follows the principle that revaluation profits and losses are a result of information growth. If the risk basis $X$ were constant on some interval $(a,b]$, then the  information growth $t \mapsto \sigma(X^t)$ is de facto halted on $(a,b]$ and the revaluation process $t \mapsto \varrho(X^t)$ produces no new profits and losses.  By postulating the latter principle also for the partial profit and loss contributions of the risk factors, we arrive at the normalization axiom:
 \begin{align*}
               X_i \textrm{ is constant on } (a,b]  \quad &\Longrightarrow \quad D_i \textrm{ is constant on } (a,b]
 \end{align*}
 for $i \in \{1, \ldots, m\}$ and $(a,b] \subset [0,\infty)$.
Our third axiom is \emph{stability} of $D$ with respect to perturbations in the empirical observation of the risk basis $X$. The potential deformations of $X$ shall be given by the set
$$\mathcal{X} \subseteq  \mathbb{D}^{d_1 \times \cdots \times d_m}.$$
The perturbed empirical observations $Y \in \mathcal{X}$ replace the original risk basis $X$ and the insurer calculates the revaluation process de facto as $t \mapsto \varrho( Y^t)$.
In order that the latter process is well-defined, we need to  assume here that the domain $\mathcal{D}$ of  $\varrho$ includes the set
\begin{align*}
 \{ Y^t: Y \in \mathcal{X}, t \geq 0 \}  \subseteq \mathcal{D}.
\end{align*}
The perturbed empirical observations $Y \in \mathcal{X}$ replace the original risk basis $X$
also  in the  calculation of the decomposition $D$. Therefore, we need to define profit and loss decompositions not only for $X$ but for all $Y \in \mathcal{X}$. Let $\mathbb{D}_0$ be the subset of those processes of $\mathbb{D}$ that are zero at time zero. For the sake of a simple notation we assume from now on that $\mathcal{X}$ includes also the original risk basis $X$.
\begin{definition}\label{DefDecScheme}
We call $\delta:   \mathcal{X} \rightarrow \mathbb{D}^m_0$ a \emph{decomposition scheme for $\varrho$} if $\delta$ assigns to each $Y \in \mathcal{X} $ a decomposition $\delta(Y)=(\delta_1(Y), \ldots, \delta_m(Y)) $ of $(\varrho(Y^t))_{t \geq 0}$.
\end{definition}
Recall that the decomposition $\delta(X)$ is inaccessible to the insurer and de facto replaced by a perturbed decomposition $\delta (Y)$. So far, our axioms of additivity and normalization refer to the decomposition $\delta(X)$ only. It is desirable that $\delta (Y)$ is also additive and normalized.
Additivity makes the decomposition easily interpretable and allows the insurer to  redistribute the profits and losses  among different parties. Normalization implements the intuitive principle that there are no revaluation profits and losses without information growth. Therefore, we extend the additivity and normalization axioms to the whole set $\mathcal{X}$.
\begin{enumerate}
  \item[(\textbf{A})] \textbf{Additivity}: 
     For all $Y \in \mathcal{X}$  and $t \geq 0$ let
        $$ \varrho (Y^t) - \varrho (Y^0) =  \delta_i(Y)(t) + \cdots +\delta_m(Y)(t) $$
        almost surely.
\end{enumerate}
\begin{enumerate}
  \item[(\textbf{N})] \textbf{Normalization}: 
      For all $Y \in \mathcal{X}$ and $(a,b] \subset [0,\infty)$
      let the implication
            \begin{align*}
               Y_i \textrm{ is constant on } (a,b]  \quad &\Longrightarrow \quad \delta_i(Y) \textrm{ is constant on } (a,b]
            \end{align*}
            be satisfied for all $i \in \{1, \ldots, m\}$.
\end{enumerate}
Before we can formalize our stability axiom, we need to further specify the potential perturbations $\mathcal{X}$ of the risk basis $X$. 
\begin{enumerate}
\item[(1)] Data collection and data processing efforts lead to time lags in the observation of $X$.  These time lags may be even asynchronous in the different components of $X$.
\item[(2)] Empirical time series data of $X$ may be  available as time quantized data only.
\end{enumerate}
Asynchronous time lags and time quantization  in the observation of $X$ do not just mean that the current information $X^t$ is behind schedule, but they can fundamentally change the structure of  the information: Asynchronous time lags may permutate the order of events, and  time quantization means that the observed information on $X$ is generally incomplete.
We model such perturbations of the risk basis by non-decreasing and right-continuous functions $\tau_i: [0,\infty) \rightarrow [0,\infty) $, $i \in \{1, \ldots, m\}$ with  $\tau_i(t) \leq t$ for all $ t \geq 0$.
The multivariate function
\begin{align*}
  \tau(t) = (\tau_0(t), \ldots , \tau_m(t))
\end{align*}
 is called a \emph{delay}. The difference $t- \tau_i(t)$ represents the current time lag at time $t$ in component $i$. The jumps of  $\tau_i$ represent time quantization in the observation of process $X_i$. For example, $\tau_i(t) = \lfloor t \rfloor$ means that $X_i \circ \tau_i$ reveals us only the values of $X_i$ at integer times.
In case that the empirical observation of the risk basis $X$ suffers from the delay $\tau$, the insurer actually observes
\begin{align*}
X \diamond \tau := (X_0 \circ \tau_0 , \ldots,  X_m \circ \tau_m)
\end{align*}
instead of $X$.
The image $\tau_i([0,t])$ gives the time points for which we come to know the values of $X_i$.
If the image $\tau_i([0,t])$ is a dense subset of $[0,t]$ and includes time $t$, then we already know the whole process $X_i^t$ because right-continuous process are uniquely defined by their values on dense subsets.

\begin{proposition}\label{tauIncresToInf}
Let $(\tau^n)_{n \in \mathbb{N}}$ be a  sequence  of 
delays that satisfy
\begin{align}\label{DefIncMonToId}
\tau^n_i([0,t]) \subseteq \tau^{n+1}_i([0,t]),\; n \in \mathbb{N} \quad \textrm{ and }  \quad \overline{\bigcup_{n \in \mathbb{N}} \tau^n_i([0,t])} =  [0,t]
\end{align}
for each $t \geq 0$ and  $i \in \{1, \ldots, m\}$.
 Then for $t>0$ and $n \rightarrow \infty$ it holds that
  \begin{enumerate}
    \item[(a)]  $ \sup_{0 \leq s \leq t} | s -\tau_i^n(s) | \rightarrow 0$ for all   $ i \in \{1, \ldots, m\}$,
    \item[(b)]  $ X \diamond \tau^n(t-) \rightarrow X(t-) $ for all   $ i \in \{1, \ldots, m\}$,
    \item[(c)] the sequence of sigma-algebras $\sigma((X\diamond \tau^n)^{t-})$, $n \in \mathbb{N}$, is non-decreasing and converges to $ \bigvee_{n \in \mathbb{N}} \sigma((X\diamond \tau^n)^{t-})  = \sigma(X^{t-})$.
\end{enumerate}
\end{proposition}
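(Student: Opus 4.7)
The plan is to establish the three statements in order, leveraging the key structural observation that the image-inclusion hypothesis, applied at variable $t$, forces pointwise monotonicity in $n$: applying the inclusion with $t$ replaced by $s$ gives $\tau^n_i(s) \in \tau^n_i([0,s]) \subseteq \tau^{n+1}_i([0,s])$, so $\tau^n_i(s) = \tau^{n+1}_i(u)$ for some $u \leq s$, and monotonicity of $\tau^{n+1}_i$ in its argument yields $\tau^{n+1}_i(s) \geq \tau^n_i(s)$. Together with $\tau^n_i(s) \leq s$, this produces a bounded monotone sequence in $n$.

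For part (a), I would first deduce pointwise convergence $\tau^n_i(s) \to s$ for every $s \in [0,t]$. Given $\epsilon > 0$, the density of $\bigcup_n \tau^n_i([0,s])$ in $[0,s]$ yields $n_0$ and $u_0 \in [0,s]$ with $\tau^{n_0}_i(u_0) > s - \epsilon$; monotonicity of $\tau^{n_0}_i$ in its argument then gives $\tau^{n_0}_i(s) \geq \tau^{n_0}_i(u_0) > s - \epsilon$, and the pointwise-in-$n$ monotonicity above extends this to all $n \geq n_0$. To upgrade to uniform convergence on $[0,t]$, I would fix a partition $0 = p_0 < \cdots < p_K = t$ with gaps below $\epsilon/2$, pick $N$ large enough that $\tau^n_i(p_j) > p_j - \epsilon/2$ for all $n \geq N$ and all $j$, and then for an arbitrary $s \in [p_j, p_{j+1}]$ apply monotonicity of $\tau^n_i$ to conclude $\tau^n_i(s) \geq \tau^n_i(p_j) > p_j - \epsilon/2 \geq s - \epsilon$.

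Part (b) follows quickly: $\tau^n_i(t-) \leq t$, and the uniform estimate from (a) applied as $s \uparrow t$ gives $\tau^n_i(t-) \to t$ from below, whereupon the left-limit property of $X_i$ yields $X_i(\tau^n_i(t-)) \to X_i(t-)$ component-wise. For part (c), I would argue three things. The nesting of sigma-algebras comes from the image nesting: for $u < t$, setting $a := \tau^n_i(u) \in A^n \subseteq A^{n+1}$ (with $A^k := \tau^k_i([0,t])$) and $v := \inf\{w : \tau^{n+1}_i(w) \geq a\}$, right-continuity of $\tau^{n+1}_i$ together with $a$ lying in the image forces $\tau^{n+1}_i(v) = a$, and the pointwise-in-$n$ monotonicity forces $v \leq u < t$, so $X_i(a)$ is $\sigma((X \diamond \tau^{n+1})^{t-})$-measurable. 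The inclusion $\sigma((X \diamond \tau^n)^{t-}) \subseteq \sigma(X^{t-})$ is immediate from $\tau^n_i(u) \leq u < t$. For the reverse inclusion in the limit, for each $s \in [0,t)$ I would pick $t' \in (s,t)$ and invoke density of $\bigcup_n \tau^n_i([0,t'])$ in $[0,t']$ to extract $n_k \to \infty$ and $u_k \in [0,t']$ with $\tau^{n_k}_i(u_k) \downarrow s$; right-continuity of $X_i$ then gives $X_i(\tau^{n_k}_i(u_k)) \to X_i(s)$, placing $X_i(s)$ in the join.

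The main obstacle is the passage from pointwise to uniform convergence in (a): the $\tau^n_i$ have jumps, so Dini's theorem does not apply directly, but the finite-partition plus per-function monotonicity argument above circumvents this cleanly. A secondary subtlety lies in (c), where approximating $X_i(s)$ requires $\tau^{n_k}_i(u_k) \downarrow s$ from above in order to exploit right-continuity of $X_i$; this is handled by choosing the auxiliary time $t' > s$ and invoking density at $t'$ rather than at $s$ itself.
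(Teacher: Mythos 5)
Your proof is correct and follows essentially the same route as the paper's: pointwise convergence of $\tau_i^n$ to the identity, upgraded to uniform convergence on $[0,t]$ via per-function monotonicity and compactness, then left-continuity of $s\mapsto X(s-)$ for (b), and density plus right-continuity for (c). The only notable difference is that you supply the careful argument for why $\sigma((X\diamond\tau^n)^{t-})$ is non-decreasing (locating $a=\tau_i^n(u)$ in the image of $\tau_i^{n+1}$ restricted to $[0,u]$ and invoking right-continuity), whereas the paper simply asserts this as a ``direct consequence'' of the nesting hypothesis; your spelled-out version is a welcome clarification, not a different approach.
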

\begin{proof}
At first, we show the pointwise convergence $\tau_i^n(t) \rightarrow  t$ for each time $t$. Note that $\tau^n_i(0)=0$ by definition. Suppose that there is a time point $t_0>0$ where $\liminf_{n \rightarrow \infty}  \tau_i^n(t_0)= s_0 < t_0$. Then the monotony assumption in \eqref{DefIncMonToId} implies that $\tau_i^n([0,t_0]) \subseteq [0,s_0]$ for all $n \in \mathbb{N}$, which is a contradiction to  the second assumption in \eqref{DefIncMonToId}. Thus,  we necessarily have $\liminf_{n \rightarrow \infty}  \tau_i^n(t)=  t$ for each time $t \geq 0$, which implies  $\tau_i^n(t) \rightarrow  t$ since  $\tau^n_i(t) \leq t$ by definition.

Let $\varepsilon >0$ be arbitrary but fixed. The fact that delays have the identity function as upper bound implies that $\tau_i^n(s +\varepsilon/4) \leq s + \varepsilon/4$ for all $s \geq 0$. On the other hand, because of the pointwise convergence of $\tau_i^n$ to the identity function (as we showed before),  there exists for each time $s_0$ an integer $n_0 \in \mathbb{N}$ such that $\tau_i^n(s_0 -\varepsilon/4) \geq s_0 - \varepsilon/2$ for all $n \geq n_0$. That means that
$$|s- \tau_i^n(s)| \leq | s_0 - \tau_i^n(s_0- \varepsilon/4) | +| s_0 - \tau_i^n(s_0+ \varepsilon/4) |  \leq \varepsilon$$
for all $|s-s_0| < \varepsilon/4$ and $n \geq n_0$ since $\tau^n_i$ is a nondecreasing function. Hence, on each compact interval $[0,t]$ there is an integer $n_1 \in \mathbb{N}$ such that $\sup_{s \in [0,t]} | s -\tau_i^n(s) | \leq \varepsilon$ for all $n \geq n_1$.  This proves assertion (a).

Statement (b) follows from (a) and the left-continuity of $s \mapsto X(s-)$.

The monotony of  the sequence $\sigma((X\diamond \tau^n)^{t-})$, $n \in \mathbb{N}$, is a direct consequence of the  monotony assumption in \eqref{DefIncMonToId}.  Moreover,
since delays have the identity function as upper bound, we have $\sigma((X\diamond \tau^n)^{t-})  \subseteq \sigma(X^{t-})$ for all  $n \in \mathbb{N}$.  As $X$ is right-continuous, it is uniquely defined by its values on any dense subset of the time line.
 So assumption \eqref{DefIncMonToId} implies that
\begin{align*}
 \sigma(X_i^{t-}) = \sigma\bigg( X_i(u) : u \in  \bigcup_{n \in \mathbb{N}} \tau^n_i([0,t))  \bigg) \subseteq \bigvee_{n \in \mathbb{N}}  \sigma((X_i\circ \tau^n_i)^{t-}).
\end{align*}
This proves assertion (c).
\end{proof}
Suppose that $(X\diamond \tau^n)^t$ describes the actually observable information about $X^t$ at time $t$. The first condition in \eqref{DefIncMonToId} implies that  the observable information is monotonously increasing in $n$, and the second condition in \eqref{DefIncMonToId} implies that the observable information is increasing to full information as $n \rightarrow \infty$, see Proposition \ref{tauIncresToInf}(c).
If  $(\tau^n)_{n \in \mathbb{N}}$ is a sequence of 
 delays that  satisfies \eqref{DefIncMonToId}, then we say that   \emph{$(\tau^n)_{n \in \mathbb{N}}$ is a refining sequence of 
 delays that increase to identity}.

In the following let $\mathbb{T}$ denote the set of all delays, and let
$\mathcal{T} \subseteq  \mathbb{T}$ be a non-empty subset that describes the potential perturbations that shall be accounted for in particular applications. So we have
$$ \mathcal{X} = \{ X \diamond \tau: \tau \in \mathcal{T}\}  \cup \{X\}.$$
A delay $\tau \in \mathcal{T}$ in the empirical observation of $X$ has the consequence that the revaluation process $t \mapsto \varrho(X^t)$ and its decomposition $\delta(X)$ are de facto replaced by $t \mapsto \varrho((X \diamond \tau)^t)$ and $\delta(X \diamond \tau)$. In insurance practice it is desirable that $\delta (X \diamond \tau)$ and $\delta(X)$  are not too different.
\begin{enumerate}
  \item[(\textbf{S})] \textbf{Stability at $X$}:
     For  any refining sequence      $(\tau^n)_{n \in \mathbb{N}} \subseteq \mathcal{T}$   of  delays  that increase to identity,  let
      \begin{align}\label{ContAxiom}
      \delta( X \diamond \tau^n)(t-)  \stackrel{\mathbb{P}}{\longrightarrow } \delta( X )(t-).
      \end{align}
\end{enumerate}
Condition \eqref{ContAxiom} uses  left limits because we generally have  $ X\diamond \tau^n(t-)  \rightarrow X(t-)$ according to Proposition \ref{tauIncresToInf}(b), whereas $ X\diamond \tau^n(t)$ is not necessarily converging to $X(t)$. Depending on the choice of $\mathcal{T}$, axiom (S) can describe different levels of stability. For $\mathcal{T}= \mathbb{T}$ we obtain the strongest form of stability. For $\mathcal{T}$ strictly smaller than $\mathbb{T}$ we get weaker forms of stability. For example, if we want to focus on continuous delays only, then we would set $\mathcal{T}= \{ \tau \in \mathbb{T}: \tau \textrm{ is continuous}\}$.

\section{Decompositions based on sequential updating}

For the moment, we put the axiomatic perspective of the previous section aside and turn to
the ISU decomposition principle introduced by Jetses \& Christiansen (2021). We largely follow the definition of Jetses \& Christiansen (2021) but expand the perspective from just a fixed time point to the full time line.

Recall that the $t$-stopped process $X^{t}=(X^t_1, \ldots, X_m^t)$  represents the currently available information on the risk factors $X_1, \ldots, X_m$  at time $t$.
Suppose that the  information updates on the risk factors $X_1, \ldots, X_m$ are asynchronously time-lagged  with $t_1, \ldots , t_m  \leq t$ being the current update statuses of each risk factor at time $t$.
Then $(X^{t_1}_1, \ldots, X^{t_m}_m)$ represents the currently observable information and
\begin{align*}
U(t_1, \ldots, t_m):=    \varrho((X^{t_1}_1, \ldots, X^{t_m}_m) )
\end{align*}
is the de facto value of the revaluation process at time $t$. %
We need to assume here that the domain $\mathcal{D}$ of $\varrho$ includes the set
$$ \{ (X^{t_1}_1, \ldots, X^{t_m}_m) : t_1, \ldots , t_m  \geq 0\} \subseteq \mathcal{D}.$$
 We denote $U=( U(t_1, \ldots, t_m))_{ t_1, \ldots , t_m  \geq 0}$
as the \emph{revaluation surface}.
We can recover the revaluation process $R$ from the revaluation surface $U$ by
\begin{align*}
  R(t) = U(t, \ldots, t), \quad  t\geq 0.
\end{align*}
For any  unbounded partition $\pi = \{0=s_0 < s_1 < \cdots  \}$ of the interval $[0,\infty)$ we
  can build the telescoping series
\begin{align*}
 R(t)-R(0) &= U(t, \ldots, t) -U(0, \ldots,0)\\
  &= \sum_{l=0}^{\infty} \Big( U(s_{l+1}\wedge t,s_{l}\wedge t, \ldots,s_{l}\wedge t)-U(s_{l}\wedge t,s_{l}\wedge t, \ldots,s_{l}\wedge t)\Big) \\
   &\quad +  \sum_{l=0}^{\infty} \Big( U(s_{l+1}\wedge t,s_{l+1}\wedge t, s_{l}\wedge t,\ldots,s_{l}\wedge t)-U(s_{l+1}\wedge t,s_{l}\wedge t, \ldots,s_{l}\wedge t)\Big) \\
  &\quad + \cdots \\
  & \quad + \sum_{l=0}^{\infty} \Big( U(s_{l+1}\wedge t,\ldots, s_{l+1}\wedge t, s_{l+1}\wedge t)-U(s_{l+1}\wedge t,\ldots, s_{l+1}\wedge t, s_{l}\wedge t)\Big)
 \end{align*}
 for each $t \geq 0 $.  These sums always exist since they have at most finitely many non-zero addends. It is natural here to interpret the $m$ different sums as an additive  decomposition $R(t)-R(0)=D_1(t)+  \cdots + D_m(t)$,   since the $i$-th sum collects exactly the  information updates for the $i$-th risk factor.
\begin{definition}\label{DefSUDecomp}
The  $m$-dimensional process  $D=(D_1, \ldots, D_m)$  defined by
\begin{align}\label{SUdecomposition}\begin{split}
  D_1(t) &= \sum_{l=0}^{\infty} \Big( U(s_{l+1}\wedge t,s_{l}\wedge t, \ldots,s_{l}\wedge t)-U(s_{l}\wedge t,s_{l}\wedge t, \ldots,s_{l}\wedge t)\Big), \\
  & \cdots  \\
    D_{m}(t) &= \sum_{l=0}^{\infty}  \Big( U(s_{l+1}\wedge t,\ldots, s_{l+1}\wedge t, s_{l+1}\wedge t)-U(s_{l+1}\wedge t,\ldots, s_{l+1}\wedge t, s_{l}\wedge t)\Big)
 \end{split}\end{align}
  is called  the \emph{SU (sequential updating) decomposition}  of $(\varrho(X^t))_{t \geq 0}$ with respect to $\pi$.
\end{definition}
The SU decomposition principle is frequently used in the economics literature, cf.~Fortin et al.~(2011), Shorrocks (2013) and Biewen (2014).
Insurers regularly apply the SU decomposition principle for the change analysis of profits and losses, cf.~Bashford \& Dubischar (2020). Candland \& Lotz (2014) denote the SU decomposition as the `waterfall decomposition', which reflects the fact that the decomposition results are often plotted in  waterfall charts, illustrated in Figure \ref{waterfall}.
\begin{figure}[h]
\begin{center}
\includegraphics[scale=0.4]{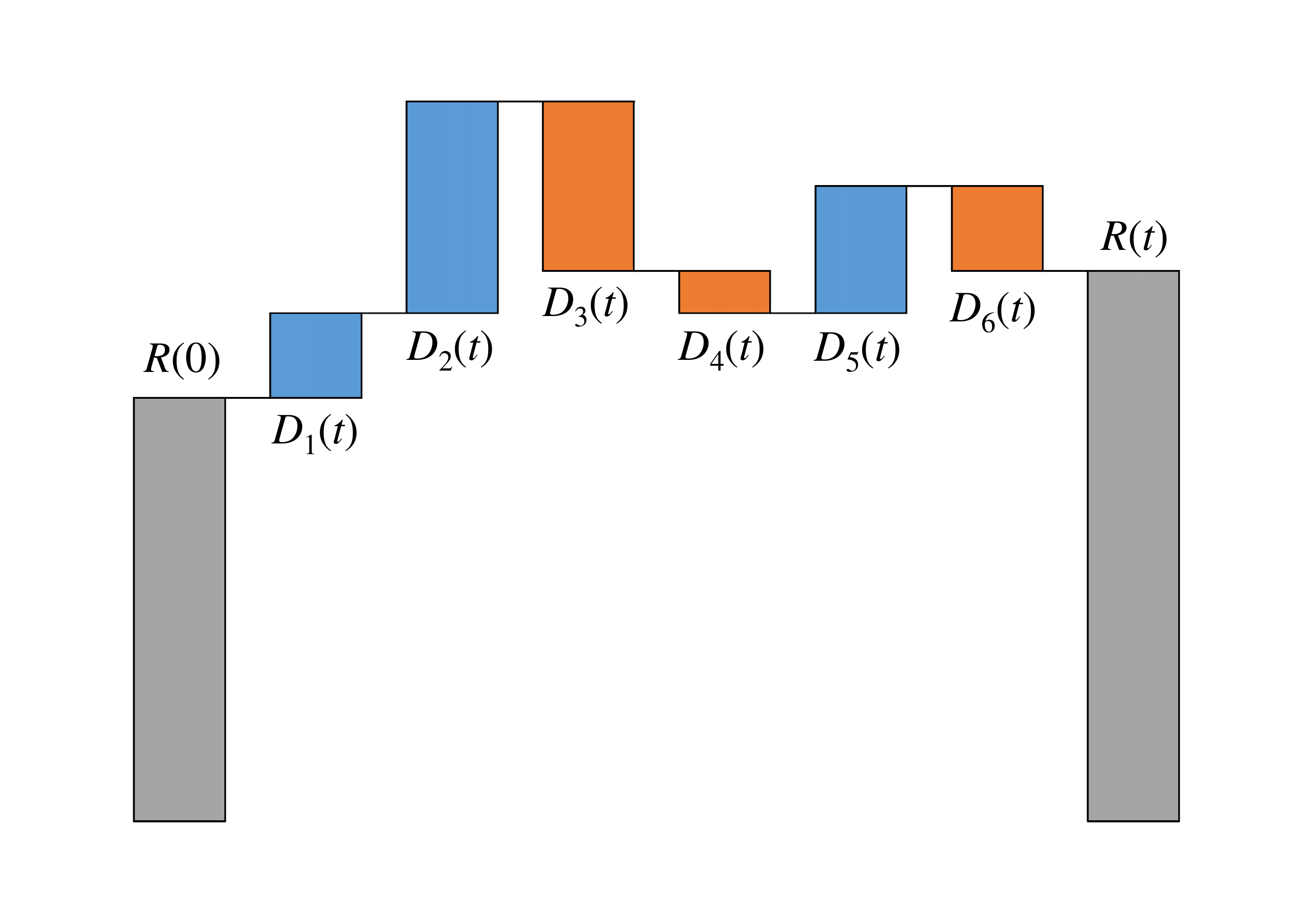}
\caption{Example of a waterfall chart showing decomposed profits and losses}\label{waterfall}%
\end{center}
\end{figure}
\begin{proposition}\label{LemmaSUisAN}
 For any unbounded partition $\pi$, the corresponding SU decomposition of $(\varrho(X^t))_{t \geq 0}$ is additive and normalized.
\end{proposition}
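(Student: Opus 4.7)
The plan is to prove the two axioms separately.

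For additivity, the telescoping identity displayed immediately before Definition \ref{DefSUDecomp} already does the job: its right-hand side is, by construction, $D_1(t) + \cdots + D_m(t)$, and its left-hand side equals $\varrho(X^t) - \varrho(X^0)$. The same telescoping goes through with any $Y \in \mathcal{X}$ in place of $X$, since the SU decomposition depends on its argument process only through the induced revaluation surface $(t_1,\ldots,t_m)\mapsto \varrho(Y_1^{t_1},\ldots,Y_m^{t_m})$. This delivers axiom (A).

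For normalization, fix $i \in \{1,\ldots,m\}$ and an interval $(a,b]$ on which $X_i$ is constant. The key structural observation is that each summand of $D_i$ has the form $U(\ldots, s_{l+1}\wedge t, \ldots) - U(\ldots, s_l\wedge t, \ldots)$, where the two $U$-arguments differ only in the $i$-th slot. Hence the $l$-th summand depends on $X_i$ solely through the stopped processes $X_i^{s_l\wedge t}$ and $X_i^{s_{l+1}\wedge t}$, and vanishes whenever these coincide. A direct check from the definition of the stopping operation shows that constancy of $X_i$ on $(a,b]$ implies $X_i^u = X_i^v$ for all $u,v \in (a,b]$. To conclude $D_i(t) = D_i(t')$ for $t,t' \in (a,b]$, I would partition the summation index: indices with $s_{l+1} \leq a$ contribute $t$-free summands; indices with $s_l \geq t \vee t'$ contribute zero summands (since $s_l\wedge t = s_{l+1}\wedge t = t$); and indices with $s_l > a$ but $s_{l+1} \leq b$ contribute zero summands by the structural observation. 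Only the unique straddling index $l_a$ with $s_{l_a} \leq a < s_{l_a+1}$ requires care: its $i$-th slot carries the pair $(X_i^{s_{l_a}}, X_i^{t})$, whose second component is invariant in $t \in (a,b]$.

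The step I expect to be the main obstacle is exactly this straddling summand. Although its $i$-th argument is pinned, the remaining arguments $X_j^t$ for $j \neq i$ do depend on $t$, so showing that the net value of the summand is nonetheless constant in $t$ requires a careful accounting of the cancellation structure in the telescoping. The natural route, which also handles jumps of $L(t)$ across interior partition points $s_k \in (a,b]$, is to note that each potential increment of $D_i$ on $(a,b]$ has trivial $i$-th slot by the same structural observation and therefore vanishes; pairing the two $U$-terms of the straddling summand via the invariance of $X_i^t$ on $(a,b]$ then collapses the difference, and $D_i$ is constant on $(a,b]$.
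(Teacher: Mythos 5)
Your additivity argument is the same telescoping computation as the paper's, and it is correct.

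The normalization argument has a genuine gap, and it is precisely the straddling summand you flagged. For $i\geq 2$ and the index $l_a$ with $s_{l_a}<a<s_{l_a+1}$, the $l_a$-summand of $D_i$ puts $s_{l_a+1}\wedge t$ not only into slot $i$ but also into all slots $1,\ldots,i-1$ of \emph{both} $U$-terms, and slots $i+1,\ldots,m$ carry the fixed value $s_{l_a}$. Pinning $X_i^{s_{l_a+1}\wedge t}$ via constancy of $X_i$ on $(a,b]$ removes the $t$-dependence in slot $i$ only; slots $1,\ldots,i-1$ still move through $(a,b]$, so what remains is a mixed difference of $\varrho$ across slot $i$ and the earlier slots, which is not zero for nonlinear $\varrho$. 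Your proposed resolution (``pairing the two $U$-terms \ldots{} then collapses the difference'') produces no cancellation. In fact the proposition as stated is \emph{false} for $D_i$, $i\geq 2$, when $a$ is not a partition point: take $m=2$, $U(t_1,t_2)=X_1(t_1)X_2(t_2)$, $\pi=\{0,1,2,\ldots\}$, and $X_2$ right-continuous with $X_2(s)=0$ for $s<0.3$ and $X_2(s)=1$ for $s\geq 0.3$. Then $X_2$ is constant on $(0.5,1.5]$, but
\[
D_2(t)=X_1(t\wedge 1)\bigl(X_2(1)-X_2(0)\bigr),\qquad t\in(0.5,1.5],
\]
which varies with $t$ whenever $X_1$ moves on $(0.5,1]$. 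For $D_1$ the statement does hold, because the straddler's slots $2,\ldots,m$ are all the fixed value $s_{l_a}\wedge t=s_{l_a}$, so the only $t$-dependence is through the $X_1$-slot, which is pinned; but that is the only case the paper's one-line argument actually covers, and even there the paper's claim that the two $U$-terms are each \emph{individually} constant on $(a,b]$ is incorrect (only their difference is). So you should not expect to close this gap by cancellation; it requires either restricting $(a,b]$ to align with partition points of $\pi$, or additional structural hypotheses on $\varrho$, or passing to the ISU limit where the straddling term vanishes.
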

\begin{proof}
The additivity follows from the construction of the SU decomposition as a telescoping sum. If $X_1$ is constant on some interval $(a,b]$, then $U(s_{l+1}\wedge t,s_{l}\wedge t, \ldots,s_{l}\wedge t)$ and $U(s_{l}\wedge t,s_{l}\wedge t, \ldots,s_{l}\wedge t)$ are also constant on $(a,b]$, so $D_1$ is constant on $(a,b]$. For $D_2, \ldots, D_m$ we can argue likewise. Thus, the SU decomposition is normalized.
\end{proof}

In  \eqref{SUdecomposition}   the information on $ X $ is updated in each time step in a specific order, starting with risk factor $X_1$, then updating $X_2$, and so on. So the SU decomposition depends on the update order, which is its main disadvantage. Jetses \& Christiansen (2021) suggest to make the impact of the update order vanish by increasing the number of updating steps to infinity.
 Let  $\pi^n=\{ 0=s^n_0 < s^n_1 < \cdots  \}$, $n \in \mathbb{N}$, be an increasing (i.e.~$\pi^n\subset \pi^{n+1}$ for all $n$) sequence of unbounded (i.e.~$\lim_{ k\rightarrow \infty} s^n_{k} =\infty$)
 partitions of $[0,\infty)$  with vanishing step lengths (i.e.~$  \lim_{ n\rightarrow \infty} \max_{k } |s^n_{k+1}-s^n_{k}| =0$).
 For each $n \in \mathbb{N}$ let  $D^n=(D_1^{n}, \ldots ,D_{m}^{n})$ be the SU decomposition of $R$ with respect to $\pi^n$. We are looking for a process $D$ that satisfies
\begin{align}\label{ISUdecomposition}\begin{split}
 D_i(t) = \plim_{n \rightarrow \infty} D^{n}_i(t), \quad i \in \{1, \ldots, m\},,\, t \geq 0.
\end{split}\end{align}
\begin{definition}\label{DefISUdecomp}
 Let $\Pi=(\pi^n)_{n \in \mathbb{N}}$ be an increasing sequence of unbounded partitions of $[0,\infty)$ with vanishing step lengths.
 If  $D \in \mathbb{D}^m_0 $ is a process that satisfies \eqref{ISUdecomposition}, then we call $D$ the \emph{ISU (infinitesimal sequential updating) decomposition}  of $(\varrho(X^t))_{t \geq 0}$ with respect to $\Pi$.
\end{definition}
Since we  assume that $D \in \mathbb{D}^m_0 $, the ISU decomposition is unique up to indistinguishability. However, we do not have a guarantee that  it really  exists.

\begin{proposition}\label{LemmaISUisAN} For any increasing sequence of unbounded partitions $\Pi$, the corresponding ISU decomposition of $(\varrho(X^t))_{t \geq 0}$  is  additive and normalized, provided that it exists.
\end{proposition}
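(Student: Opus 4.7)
The plan is to inherit both properties from the SU decompositions via the defining convergence $D^n_i(t) \xrightarrow{\mathbb{P}} D_i(t)$, using Proposition \ref{LemmaSUisAN} as the starting point and the fact that convergence in probability cooperates well with finite linear relations.

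First I would establish additivity. By Proposition \ref{LemmaSUisAN}, for every $n$ and every $t \geq 0$ we have $\varrho(X^t) - \varrho(X^0) = D^n_1(t) + \cdots + D^n_m(t)$ almost surely. The right-hand side is a fixed finite sum of terms each converging in probability to $D_i(t)$, so by the continuous mapping theorem the sum converges in probability to $D_1(t) + \cdots + D_m(t)$. Since the left-hand side is deterministic in $n$, the almost sure identity passes to the limit, giving $\varrho(X^t) - \varrho(X^0) = D_1(t) + \cdots + D_m(t)$ almost surely, which is axiom (A) restricted to $Y=X$.

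Next I would handle normalization. Fix $i \in \{1,\ldots,m\}$ and an interval $(a,b]$ on which $X_i$ is constant. By Proposition \ref{LemmaSUisAN}, for each $n$ the path $s \mapsto D^n_i(s)$ is constant on $(a,b]$, so in particular $D^n_i(t) - D^n_i(s) = 0$ almost surely for every pair $a < s \leq t \leq b$. Taking the $\plim$ in $n$ gives $D_i(t) - D_i(s) = 0$ almost surely for each such pair.

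The subtle step is to upgrade this pointwise-in-time almost sure equality to a pathwise statement, since axiom (N) requires the process $D_i$ itself to be constant on $(a,b]$. For this I would pick a countable dense subset $Q$ of $(a,b]$ containing $b$, form the countable union of the $\mathbb{P}$-null exceptional sets over all pairs $(s,t) \in Q^2$, and work on the resulting almost sure event $\Omega_0$. On $\Omega_0$, the path $s \mapsto D_i(s)$ is constant on $Q$, and since $D_i \in \mathbb{D}$ is right-continuous with left limits, for any $u \in (a,b]$ we can approximate $u$ from the right along $Q$ (using $b \in Q$ to handle $u=b$) to conclude that $D_i(u)$ equals the common value on $Q$. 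Hence $D_i$ is constant on $(a,b]$ up to indistinguishability, which is all that the normalization axiom requires. The main (and only nontrivial) obstacle is exactly this passage from pointwise a.s.\ constancy to pathwise constancy, and the right-continuity of elements of $\mathbb{D}$ together with the countable dense subset trick resolves it cleanly.
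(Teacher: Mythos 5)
Your proof takes essentially the same route as the paper's: inherit additivity and normalization from the SU decompositions (Proposition~\ref{LemmaSUisAN}) by passing to the probability limit term by term. The one thing you add that the paper leaves implicit is the upgrade from pointwise a.s.\ constancy at fixed time pairs to pathwise constancy of $D_i$ on $(a,b]$, via a countable dense subset and right-continuity of elements of $\mathbb{D}$; this is a correct and worthwhile observation, since the paper's proof only establishes $D_i(t)-D_i(s)=0$ a.s.\ for each fixed pair $a\leq s\leq t\leq b$ and relies on the reader to close the gap using the càdlàg property.
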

\begin{proof}
As the SU decomposition is additive, see Proposition \ref{LemmaSUisAN}, equation  \eqref{ISUdecomposition} implies that
$$ \sum_i D_i(t) = \sum_i \plim_{n \rightarrow \infty} D^{n}_i(t) =\plim_{n \rightarrow \infty} \sum_i D^{n}_i(t)  = \varrho(X^t)-\varrho(X^0).$$
If $X_i$ is constant on $(a,b]$, then we get
$$  D_i(t) -D_i(s)  = \plim_{n \rightarrow \infty} (D^{n}_i(t) - D^{n}_i(s)) =0$$
for $a\leq s \leq t \leq b$ since the SU decomposition is  normalized, cf.~Proposition \ref{LemmaSUisAN}.
\end{proof}

\begin{definition}[ISU decomposition scheme]
The ISU decomposition scheme $\delta^{ISU}$ with respect to $\Pi$ is the mapping that assigns to each
argument $Y \in \mathcal{X}$ the ISU decomposition of $(\varrho(Y^t))_{t \geq 0}$ with respect to $\Pi$.
\end{definition}
We say that $\delta^{ISU}$ exists if the ISU decompositions of $(\varrho(Y^t))_{t \geq 0}$  exist for all $Y \in \mathcal{X}$, which in particular implies that
the domain $\mathcal{D}$ of $\varrho$ has to include the set
$$ \{ (Y^{t_1}_1, \ldots, Y^{t_m}_m) : t_1, \ldots , t_m  \geq 0,  Y \in \mathcal{X}\} \subseteq \mathcal{D}.$$
We say that  a delay $\tau$ is \emph{phased}  if there exists an unbounded partition  $\{ 0=s_0 < s_1 < \cdots  \}$ of $[0,\infty)$ such that on each interval $(s_l,s_{l+1}]$ of the partition at most one component of $\tau$ is non-constant. As we show in the following proposition, if ISU decompositions are stable with respect to phased delays, then they are generally invariant with respect to the choice of $\Pi$ and invariant with respect to the choice of the update order.
\begin{proposition}[Invariance]\label{CorollaryPiArbitrary}
Suppose that  $\mathcal{T}$  includes a refining sequence   $(\tau^n)_{n \in \mathbb{N}}$   of  phased delays  that increase to identity.
Let $\delta^{ISU}$ and  $\widetilde{\delta}^{ISU}$ denote the ISU decomposition schemes with respect to $\Pi$ and $\widetilde{\Pi}$, assuming  that they exist. The update order in $\widetilde{\delta}^{ISU}$ may be permutated. If $\delta^{ISU}$ and  $\widetilde{\delta}^{ISU}$ are stable at $X$, then
\begin{align*}
  \delta^{ISU} (X) = \widetilde{\delta}^{ISU} (X)
\end{align*}
almost surely.
\end{proposition}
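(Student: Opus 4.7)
The plan is to exploit the fact that, on any perturbation $X\diamond\tau$ with $\tau$ phased, axioms (A) and (N) alone pin down the decomposition uniquely, with no remaining freedom for the choice of partition $\Pi$ or of update order; stability then propagates this uniqueness back to the unperturbed risk basis $X$.

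First, I would establish a pathwise uniqueness lemma for phased inputs: if $\tau$ is phased with phase partition $\{0=p_0<p_1<\cdots\}$ and active components $j(0),j(1),\ldots$, then any $D\in\mathbb{D}_0^m$ that decomposes $(\varrho((X\diamond\tau)^t))_{t\geq 0}$ additively and is normalized is uniquely determined. Indeed, on each phase interval $(p_l,p_{l+1}]$ every component of $X\diamond\tau$ except $X_{j(l)}\circ\tau_{j(l)}$ is constant, so normalization forces $D_i(t)=D_i(p_l)$ for $i\neq j(l)$, and additivity then fixes $D_{j(l)}(t)=D_{j(l)}(p_l)+\varrho((X\diamond\tau)^t)-\varrho((X\diamond\tau)^{p_l})$ on $(p_l,p_{l+1}]$. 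Together with $D(0)=0$, this determines $D$ on $[0,\infty)$ by induction on $l$. Both $\delta^{ISU}$ and $\widetilde{\delta}^{ISU}$ yield additive and normalized decompositions for every input in $\mathcal{X}$ by Proposition \ref{LemmaISUisAN}, and these two properties survive an arbitrary permutation of the update order because both the telescoping structure of the SU definition and the vanishing of its increments on intervals of constancy of $X_i$ are unaffected by a relabelling of coordinates. Applied to each phased $\tau^n$ provided by the hypothesis, the uniqueness lemma then gives $\delta^{ISU}(X\diamond\tau^n)=\widetilde{\delta}^{ISU}(X\diamond\tau^n)$ almost surely for every $n$.

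Second, stability of both schemes at $X$ yields $\delta^{ISU}(X\diamond\tau^n)(t-)\to\delta^{ISU}(X)(t-)$ and $\widetilde{\delta}^{ISU}(X\diamond\tau^n)(t-)\to\widetilde{\delta}^{ISU}(X)(t-)$ in probability at every $t>0$, so uniqueness of probability limits forces $\delta^{ISU}(X)(t-)=\widetilde{\delta}^{ISU}(X)(t-)$ almost surely for each $t>0$. To upgrade this pointwise equality of left limits to indistinguishability of the two processes in $\mathbb{D}_0^m$, I would apply the identity $Z(t_0)=\lim_{s\downarrow t_0}Z(s-)$, valid pathwise for any right-continuous $Z$, to the difference $Z=\delta^{ISU}(X)-\widetilde{\delta}^{ISU}(X)$ along a countable dense set of rationals on which $Z(s-)=0$ holds off a single null set. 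The main conceptual hurdle is the first step, namely recognising that (A) and (N) together already determine the decomposition algebraically on phased inputs, so that neither the partitions $\Pi,\widetilde{\Pi}$ nor the update orders enter the argument beyond securing additivity and normalization; once that is in place, stability and the cleanup in $\mathbb{D}_0^m$ are routine.
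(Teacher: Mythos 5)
Your proof is correct and takes essentially the same approach as the paper: on each phase interval of a phased delay, normalization fixes the inactive components and additivity fixes the active one, pinning down any additive normalized decomposition (and hence both ISU schemes, by Proposition \ref{LemmaISUisAN}) on the perturbed inputs $X\diamond\tau^n$, after which stability and right-continuity carry the equality back to $X$. Your formulation makes the interplay of (A) and (N) slightly more explicit than the paper's, which attributes the key identity to normalization alone, but the argument is the same.
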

\begin{proof}
For each $n \in \mathbb{N}$,
let $(a^n_{l,i},b^n_{l,i}]$, $l \in \mathbb{N}_0$, $i \in \{1, \ldots, m\}$ be a partition of $[0,\infty)$ such that $(\tau^n_j)_{j \neq i}$ is constant on $(a^n_{l,i},b^n_{l,i}]$, $l \in \mathbb{N}_0$.
Since $\delta^{ISU}$ is normalized, see Proposition \ref{LemmaISUisAN}, we have
\begin{align*}
  \delta^{ISU}_i(X\diamond \tau^n)(t) &=  \sum_{l} \big( \varrho( (X \diamond \tau^n) ^{b^n_{l,i}\wedge t})- \varrho((X \diamond \tau^n)^{a^n_{l,i}\wedge t})\big)
\end{align*}
for all $i \in \{1, \ldots, m\}$ and $t \geq 0$.
Likewise we also have
\begin{align*}
  \widetilde{\delta}^{ISU}_i(X\diamond \tau^n)(t) &=  \sum_{l} \big( \varrho( (X \diamond \tau^n) ^{b^n_{l,i}\wedge t})- \varrho((X \diamond \tau^n)^{a^n_{l,i}\wedge t})\big)
\end{align*}
for all $i \in \{1, \ldots, m\}$ and $t \geq 0$, so
\begin{align*}
   \delta^{ISU}_i(X\diamond \tau^n)(t) =  \widetilde{\delta}^{ISU}_i(X\diamond \tau^n)(t)
\end{align*}
for all $i \in \{1, \ldots, m\}$ and  $t \geq 0$. Since both of the latter processes are right-continuous by definition, we even have  $   \delta^{ISU}_i(X\diamond \tau^n)=  \widetilde{\delta}^{ISU}_i(X\diamond \tau^n)$ almost surely.
By using the stability of  $\delta^{ISU}$ and  $\widetilde{\delta}^{ISU}$ at $X$, we obtain
\begin{align*}
  \delta^{ISU}_i(X)(t-) = \plim_{n \rightarrow \infty}  \delta^{ISU}_i(X\diamond\tau^n )(t-)  = \plim_{n \rightarrow \infty}  \widetilde{\delta}^{ISU}_i(X\diamond\tau^n )(t-) =  \widetilde{\delta}^{ISU}_i(X)(t-)
\end{align*}
almost surely for each $t>0$. Since $\delta^{ISU}_i(X)$ and $\widetilde{\delta}^{ISU}_i(X)$  are elements of $\mathbb{D}_0$, we can conclude that  $   \delta^{ISU}_i(X)=  \widetilde{\delta}^{ISU}_i(X)$ almost surely for  all $i \in \{1, \ldots, m\}$.
\end{proof}

\section{Characterization of  additive, normalized, stable decompositions}

This section characterize decompositions that satisfy the axioms (A), (N), and (S).
\begin{theorem}[Uniqueness]\label{Prop:UniqueANS}
Let  the set $\mathcal{T}$ include  a refining sequence   $(\tau^n)_{n \in \mathbb{N}}$   of  phased delays  that increase to identity.
Suppose that $\delta$ and $\widetilde{\delta}$ are additive, normalized decomposition schemes that are stable at $X$.
 Then  we almost surely have
\begin{align*}
  \delta (X) = \widetilde{\delta} (X).
\end{align*}
\end{theorem}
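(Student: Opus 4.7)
The plan is first to exploit the phased-delay structure to pin down $\delta(X \diamond \tau^n)$ and $\widetilde{\delta}(X \diamond \tau^n)$ by the same explicit formula, and then to transfer this coincidence to $X$ itself via stability. Fix $n$ and let $\{0 = s_0^n < s_1^n < \cdots\}$ be an underlying partition witnessing that $\tau^n$ is phased. On each interval $(s_l^n, s_{l+1}^n]$ at most one coordinate $\tau^n_{i(l)}$ is non-constant, so for every $j \neq i(l)$ the process $(X \diamond \tau^n)_j = X_j \circ \tau^n_j$ is constant there. Applying normalization (\textbf{N}) to both $\delta$ and $\widetilde{\delta}$, every coordinate $\delta_j(X \diamond \tau^n)$ and $\widetilde{\delta}_j(X \diamond \tau^n)$ with $j \neq i(l)$ is constant on $(s_l^n, s_{l+1}^n]$. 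Additivity (\textbf{A}) then attributes the full increment $\varrho((X \diamond \tau^n)^{s_{l+1}^n \wedge t}) - \varrho((X \diamond \tau^n)^{s_l^n \wedge t})$ to the $i(l)$-th coordinate of each scheme. Summing in $l$ and telescoping yields the identical formula
\begin{align*}
\delta_i(X \diamond \tau^n)(t) \;=\; \widetilde{\delta}_i(X \diamond \tau^n)(t) \;=\; \sum_{l:\, i(l)=i} \bigl[\varrho((X \diamond \tau^n)^{s_{l+1}^n \wedge t}) - \varrho((X \diamond \tau^n)^{s_l^n \wedge t})\bigr]
\end{align*}
for every $i \in \{1, \ldots, m\}$ and every $t \geq 0$, so $\delta(X \diamond \tau^n) = \widetilde{\delta}(X \diamond \tau^n)$ pathwise. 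This reuses exactly the mechanism of Proposition~\ref{CorollaryPiArbitrary}, only applied to arbitrary decomposition schemes satisfying (\textbf{A}) and (\textbf{N}) instead of ISU schemes specifically.

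Next I invoke stability (\textbf{S}) of both schemes at $X$ along the sequence $(\tau^n)$: for each fixed $t > 0$,
\begin{align*}
\delta_i(X \diamond \tau^n)(t-) \stackrel{\mathbb{P}}{\longrightarrow} \delta_i(X)(t-), \qquad \widetilde{\delta}_i(X \diamond \tau^n)(t-) \stackrel{\mathbb{P}}{\longrightarrow} \widetilde{\delta}_i(X)(t-).
\end{align*}
Since the two pre-limit sequences coincide almost surely by the previous step, the limits in probability coincide almost surely, giving $\delta_i(X)(t-) = \widetilde{\delta}_i(X)(t-)$ a.s.\ for every deterministic $t > 0$.

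The remaining task --- and the only step requiring genuine care --- is upgrading pointwise-in-$t$ almost-sure equality of left-limits to indistinguishability of the right-continuous processes. Picking a countable dense subset $Q \subset (0,\infty)$, the single null set $N = \bigcup_{q \in Q} N_q$ handles all $q \in Q$ simultaneously. For $\omega \notin N$, the maps $t \mapsto \delta_i(X)(t-, \omega)$ and $t \mapsto \widetilde{\delta}_i(X)(t-, \omega)$ are left-continuous (left-limits of c\`adl\`ag paths are left-continuous functions) and agree on the dense set $Q$, hence agree on all of $(0, \infty)$. To recover equality of the c\`adl\`ag processes themselves, I approach each $t \geq 0$ from the right along a sequence $t_k \downarrow t$ of common continuity points of both paths --- available because each c\`adl\`ag path has at most countably many jumps --- where the path values coincide with the (already equal) left-limits. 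Right-continuity then forces $\delta_i(X)(t, \omega) = \widetilde{\delta}_i(X)(t, \omega)$ for every $t \geq 0$, $\omega \notin N$, combining with the trivial equality at $t=0$ to yield indistinguishability of $\delta_i(X)$ and $\widetilde{\delta}_i(X)$ for each $i$.
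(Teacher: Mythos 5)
Your proof is correct and follows exactly the paper's route: the paper's own proof of this theorem simply points to the proof of Proposition~\ref{CorollaryPiArbitrary}, observing that that argument only uses additivity and normalization of the decomposition scheme (not any ISU-specific structure), which is precisely what you carry out explicitly. Your last paragraph carefully fills in the passage from pointwise-in-$t$ almost-sure equality of left limits to indistinguishability of the c\`adl\`ag processes, a step the paper leaves implicit by merely noting $\delta_i(X), \widetilde{\delta}_i(X) \in \mathbb{D}_0$.
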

\begin{proof} The proof is analogous to the proof of Proposition \ref{CorollaryPiArbitrary}, since the latter proof uses only the additivity and normalization of ISU decompositions but not their specific structure.
\end{proof}
\begin{theorem}[ISU representation]\label{Prop:EquivISUandANS_F}
Let  the set $\mathcal{T}$ include  a refining sequence   $(\tau^n)_{n \in \mathbb{N}}$   of  phased delays  that increase to identity.
Suppose that $\delta$ is any additive,  normalized decomposition scheme that is stable at $X$. If $\delta^{ISU}$ exists for some $\Pi$ and is stable at $X$, then we almost surely have
\begin{align*}
  \delta (X) = \delta^{ISU} (X).
\end{align*}
\end{theorem}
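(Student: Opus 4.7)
The proof is essentially an immediate corollary of Theorem~\ref{Prop:UniqueANS}, so my plan is to recognise $\delta^{ISU}$ as a second admissible decomposition scheme and invoke uniqueness.

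The first step is to verify that $\delta^{ISU}$ satisfies the hypotheses of Theorem~\ref{Prop:UniqueANS}. By assumption, $\delta^{ISU}$ exists (for the chosen $\Pi$), so in particular it is a decomposition scheme in the sense of Definition~\ref{DefDecScheme}. Proposition~\ref{LemmaISUisAN} guarantees that $\delta^{ISU}(Y)$ is additive and normalized for every $Y\in\mathcal{X}$, so $\delta^{ISU}$ is an additive, normalized decomposition scheme. Stability at $X$ is part of the theorem's hypothesis. Hence both $\delta$ and $\delta^{ISU}$ are additive, normalized decomposition schemes that are stable at $X$.

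The second step is simply to apply Theorem~\ref{Prop:UniqueANS} with $\widetilde{\delta}=\delta^{ISU}$, noting that the assumption on $\mathcal{T}$ containing a refining sequence of phased delays increasing to the identity is identical in both theorems. This yields $\delta(X)=\delta^{ISU}(X)$ almost surely, as required.

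There is no genuine obstacle in this theorem once Theorem~\ref{Prop:UniqueANS} is in hand; the substantive content sits in the uniqueness argument itself (which in turn mirrors Proposition~\ref{CorollaryPiArbitrary}). The mechanism worth recalling is that on any phased delay $\tau^n$, additivity plus normalization uniquely pin down the value of each $\delta_i(X\diamond\tau^n)$ as a telescoping sum of $\varrho$-increments over the intervals where only the $i$-th component of $\tau^n$ moves; this forces $\delta(X\diamond\tau^n)$ and $\delta^{ISU}(X\diamond\tau^n)$ to coincide path by path, and stability at $X$ then transports the equality to $X$ via the $\plim$ in \eqref{ContAxiom} together with Proposition~\ref{tauIncresToInf}. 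Because this argument only uses the axioms (A), (N), (S) and never the defining formula \eqref{ISUdecomposition}, no further work beyond citing Theorem~\ref{Prop:UniqueANS} is needed.
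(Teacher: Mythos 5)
Your proof is correct and matches the paper's own argument: the paper likewise cites Proposition~\ref{LemmaISUisAN} to establish that $\delta^{ISU}$ is an additive, normalized decomposition scheme and then invokes Theorem~\ref{Prop:UniqueANS} with $\widetilde{\delta}=\delta^{ISU}$. Your closing recap of the telescoping mechanism is accurate but not needed for the citation-level proof.
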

\begin{proof}
The theorem follows from Proposition \ref{LemmaISUisAN} and Theorem \ref{Prop:UniqueANS}.
\end{proof}

Theorem \ref{Prop:EquivISUandANS_F}  shows us that the axioms (A), (N), (S)  necessarily lead to ISU decompositions if the latter are sufficiently stable.  For given mapping $\varrho$ and  risk basis $X$, the above results suggest to determine the decomposition $D$ for $(\varrho(X^t))_{t \geq 0}$  as follows:
\begin{enumerate}
\item[(1)] Define the set $\mathcal{T}$ of delays  that shall be considered.
\item[(2)] Calculate $\delta^{ISU}$ on $\mathcal{X}$ for some choice of $\Pi$.
\item[(3)] Verify that $\delta^{ISU}$ is stable at $X$.
\item[(4)] Set $D= \delta^{ISU}(X)$.
\end{enumerate}
If all steps can be successfully completed, then $D$ is the only decomposition of $(\varrho(X^t))_{t \geq 0}$ that is additive, normalized, and stable.
In insurance practice it is usually  not really necessary to numerically calculate the ISU decomposition $D= \delta^{ISU}(X)$, since it can be approximated by SU decompositions, see \eqref{ISUdecomposition}.
SU decompositions are easy to calculate; the insurer just needs to sufficiently increase the frequency of the usual balance sheet calculations so that the addends in \eqref{SUdecomposition} become available.
 Yet, before such an approximation is used, one should verify on a theoretical level that the ISU decomposition scheme $\delta^{ISU}$ indeed exists and is  stable at $X$.

\section{Examples}\label{SectionExamples}

This section looks at three examples that illustrate the theoretical results of the previous sections. In all three examples we aim to split the profits and losses of a life insurance portfolio into investment surplus and mortality surplus.
For improving the readability, the mathematical proofs are postponed to  the next section.

We consider  a portfolio of 
endowment insurances that start at time zero and run for $T$ years. Let $(p_j)_{j}$ be the individual lump sum premiums at the start of the insurance contracts,  and let $(b_j)_{j}$ be the endowment benefits in case of survival till time $T$. The status of the $j$-th insured shall be given by a right-continuous jump process $N_j$ that starts from zero at time  zero and jumps to $1$ at the time of death. The so-called mortality intensity for the $j$-th insured is defined as
\begin{align*}
  \lambda_j(t) = \lim_{h \downarrow 0} \frac{\mathbb{P}(N_j(t+h)-N_j(t)=1)}{\mathbb{P}(N_j(t)=0)}, \quad t \geq 0,
\end{align*}
assuming that the limit exists.  For the sake of a convenient notation we set  $I_j=1-N_j$.
At time zero, the hypothetical own funds of the insurance portfolio equal
\begin{align*}
  A(0) -L(0) =  \sum_{j} p_j -  \sum_{j} \frac{I_j(T)}{\kappa(T)} b_j\,.
\end{align*}

\subsection*{Risk-neutral valuation by a martingale measure}

With the aim  to  decompose the revaluation process $R$ into an  investment part and a mortality part, we set
 $$X=(X_1,X_2)=(\Phi, N),$$ where  $N=(N_j)_j$ is the vector of all individual counting processes.  Suppose that $R$ is given by
\begin{align*}
  R(t)=  \varrho(X^t) = \E^{\mathbb{Q}} \big[A(0)-L(0) \big| \sigma(X^t) 
  \big]
\end{align*}
for some risk-neutral martingale measure $\mathbb{Q}$, see example \eqref{RhoGleichEQ}. 
By stopping the processes $X_1=\Phi$ and $X_2=N$ at asynchronous time points $t_1$ and $t_2$, we obtain the revaluation surface
\begin{align*}
  U(t_1,t_2)=  \varrho((X_1^{t_1},X_2^{t_2})) = \E^{\mathbb{Q}} \big[A(0)-L(0) \big| \sigma(X_1^{t_1},X_2^{t_2}) 
  \big].
\end{align*}
Let $\Phi$   be a Brownian motion under $\mathbb{Q}$ with the representation
 $$\d\Phi(t) = \mu \,\d t + \sigma \, \d W^{\mathbb{Q}}(t) $$
so that $\kappa$ corresponds to a Black-Scholes model.  We assume that $\Phi, N_1, N_2, \ldots$  are stochastically independent under $\mathbb{Q}$. Let $(\lambda^{\mathbb{Q}}_j)_j$ be the transition intensities of $(N_j)_j$ under $\mathbb{Q}$, so that $\d N_j(t)$ has the $\mathbb{Q}$-compensator
\begin{align*}
  \d C^{\mathbb{Q}}_j(t) = I_j(s-) \lambda^{\mathbb{Q}}_j(s) \,\d s
\end{align*}
with respect to the natural filtration of $X$. 
 We focus on continuous delays only and set
 $$ \mathcal{T}= \{\tau \in \mathbb{T}: \tau \textrm{ is continuous} \}. $$
\begin{proposition}\label{LemmaExp1}
The ISU decomposition scheme $\delta^{ISU}$ exists and is stable at $X$. It  holds that
\begin{align}\label{D1Exp1}
  \delta^{ISU}_1(X)(t) & = \sum_{j} b_j  \int_{(0,t\wedge T]}  \frac{I_j(s)  \, \mathrm{e}^{-\int_s^T \lambda^{\mathbb{Q}}_j(u)\,\d u} }{\kappa(s)\,\mathrm{e}^{  (T-s)(\mu-\sigma^2) }}   \, \sigma\,\d W^{\mathbb{Q}}(s),\\
  \label{D2Exp1}
  \delta^{ISU}_2(X)(t) & =  \sum_{j} b_j \int_{(0,t\wedge T]} \frac{ \mathrm{e}^{-\int_s^T \lambda^{\mathbb{Q}}_j(u)\,\d u} }{\kappa(s)\,\mathrm{e}^{  (T-s)(\mu-\sigma^2) }} \, \d (N_j-C^{\mathbb{Q}}_j)(s)
\end{align}
almost surely for all $t \geq 0$.
\end{proposition}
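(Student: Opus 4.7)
My plan is to derive an explicit product form for the revaluation surface via $\mathbb{Q}$-independence, recognise the SU sums as Riemann sums for a stochastic and a Stieltjes integral respectively, and then transfer convergence into these integrals to obtain existence on $\mathcal{X}$ and stability at $X$. The main technical obstacle will be the stability statement, which requires convergence of stochastic integrals after composing with the time change $\tau^n$; this is controllable because $f$ and $g_j$ (defined below) are uniformly bounded on $[0,T]$.

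\textbf{Step 1 (explicit form of $U$).} Using $\mathbb{Q}$-independence of $\Phi$ and the $N_j$ together with $A(0)-L(0)=\sum_j p_j - \sum_j b_j I_j(T)/\kappa(T)$, the conditional expectation defining $U$ factorises. A Gaussian computation for the Black--Scholes process $\kappa$ gives, for $s\leq T$,
\[
\E^{\mathbb{Q}}\bigl[\kappa(T)^{-1}\bigm|\sigma(\Phi^s)\bigr] = f(s),\qquad f(s):=\frac{1}{\kappa(s)\,e^{(T-s)(\mu-\sigma^2)}},
\]
while the intensity representation of $N_j$ yields $\E^{\mathbb{Q}}[I_j(T)\mid\sigma(N_j^s)] = g_j(s)$ with $g_j(s):=I_j(s)\exp(-\int_s^T\lambda_j^{\mathbb{Q}}(u)\,du)$. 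Both factors are constant on $[T,\infty)$, so
$U(t_1,t_2) = \sum_j p_j - \sum_j b_j f(t_1\wedge T)g_j(t_2\wedge T)$.

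\textbf{Step 2 (ISU as the limit of the SU sums).} For a partition $\pi^n$ with vanishing mesh, the product structure in $j$ and $(t_1,t_2)$ reduces the SU sums to
\[
D_1^n(t) = -\sum_j b_j\sum_l g_j(s_l^n\wedge t)\bigl(f(s_{l+1}^n\wedge t)-f(s_l^n\wedge t)\bigr),
\]
\[
D_2^n(t) = -\sum_j b_j\sum_l f(s_{l+1}^n\wedge t)\bigl(g_j(s_{l+1}^n\wedge t)-g_j(s_l^n\wedge t)\bigr).
\]
Itô's formula applied to $f$ gives $df(s)=-\sigma f(s)\,dW^{\mathbb{Q}}(s)$, so $f$ is a continuous $\mathbb{Q}$-martingale and the $D_1^n(t)$ are left-point Riemann sums for the corresponding Itô integral, which converge in probability to $-\sum_j b_j\int_{(0,t\wedge T]}g_j(s-)\,df(s)$; substituting $df$ yields \eqref{D1Exp1}. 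For $D_2^n(t)$, the identity $dg_j(s) = -e^{-\int_s^T\lambda_j^{\mathbb{Q}}(u)\,du}\,d(N_j-C_j^{\mathbb{Q}})(s)$ combined with the a.s.\ finiteness of the jump set of $g_j$ on $[0,T]$ gives pathwise convergence of the Stieltjes sums to $-\sum_j b_j\int_{(0,t\wedge T]}f(s)\,dg_j(s)$, which is \eqref{D2Exp1}.

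\textbf{Step 3 (existence on $\mathcal{X}$ and stability at $X$).} For a continuous $\tau\in\mathcal{T}$, continuity and monotonicity give $\tau_i([0,t_i])=[0,\tau_i(t_i)]$, hence $\sigma(Y_i^{t_i})=\sigma(X_i^{\tau_i(t_i)})$ and $U_Y(t_1,t_2) = U(\tau_1(t_1),\tau_2(t_2))$; repeating Step 2 with $s$ replaced by $\tau_i(s)$ yields $\delta^{ISU}(Y)$ as the analogous integrals with integrands $g_j\circ\tau_2$ and integrator $f\circ\tau_1$, establishing existence of $\delta^{ISU}$ on $\mathcal{X}$. For a refining sequence $(\tau^n)$, Proposition \ref{tauIncresToInf}(a) gives $\tau_i^n(s)\to s$ uniformly on $[0,t]$, so $f\circ\tau_1^n\to f$ uniformly by continuity of $f$, and $g_j\circ\tau_2^n\to g_j(\cdot-)$ off the finite jump set of $N_j$. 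A standard convergence theorem for stochastic integrals (e.g.\ Protter Ch.~II), applied to the uniformly bounded integrands $g_j\circ\tau_2^n$ and the integrators $f\circ\tau_1^n$ whose quadratic variations $\sigma^2\int_0^\cdot f(\tau_1^n(u))^2\,d\tau_1^n(u)$ converge to $\sigma^2\int_0^\cdot f(u)^2\,du$ uniformly on $[0,T]$, yields convergence in probability of the Brownian part to $\delta^{ISU}_1(X)(t-)$; the finite-variation part converges pathwise because $N_j$ has only finitely many jumps on $[0,T]$. This gives the required stability \eqref{ContAxiom}.
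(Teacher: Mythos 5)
Your proof takes essentially the same route as the paper: exploit $\mathbb{Q}$-independence to obtain the product form $U(t_1,t_2)=\sum_j p_j-\sum_j b_j f(t_1\wedge T)\,g_j(t_2\wedge T)$, recognise the SU sums as It\^o/Stieltjes Riemann sums converging to \eqref{D1Exp1}--\eqref{D2Exp1}, compute $\delta^{ISU}(X\diamond\tau)$ for continuous $\tau$, and pass to the refining limit via Proposition~\ref{tauIncresToInf}(a) together with dominated convergence. The only cosmetic difference is that you phrase $\delta^{ISU}_1(X\diamond\tau)$ as integration of $g_j\circ\tau_2$ against the time-changed integrator $f\circ\tau_1$, whereas the paper pulls the delay into the integrand and keeps $W^{\mathbb{Q}}$ as integrator over $(0,\tau_1(t)\wedge T]$; these coincide by the time-change formula for stochastic integrals.
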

According to  Proposition \ref{LemmaExp1} and Theorem \ref{Prop:EquivISUandANS_F}, the formulas \eqref{D1Exp1}\&\eqref{D2Exp1}  represent
the only decomposition of $(\varrho(X^t))_{t \geq 0}$ that is additive, normalized, and stable.
The decomposition \eqref{D1Exp1}\&\eqref{D2Exp1} actually coincides with the martingale representation of $R$, so it equals the so-called MRT decomposition of Schilling et al.~(2020). While the `meaningful risk decomposition' axioms of Schilling et al.~(2020) are not sufficient to uniquely identify the MRT decomposition, our axioms of additivity, normalization and stability  uniquely characterize the MRT decomposition here.

\subsection*{Conservative valuation by a conditional risk measure}

We redefine $R$ as conditional $\mathbb{P}$-expectation  plus  a safety margin according to the standard deviation principle,
\begin{align*}
  R(t) =\varrho(X^t) =\E[A(0)-L(0)| \sigma(X^t) 
  ] + \alpha\,\sqrt{ \mathrm{Var}[A(0)-L(0)| \sigma(X^t) 
  ]},
\end{align*}
where $\alpha >0$ is a fixed parameter that controls the safety level. This formula is an example of a conditional risk measure  according to \eqref{RhoAsCondRiskMeas}.
We still use
 $$X=(\Phi, N)$$  as our risk basis.
Under the real-world measure $\mathbb{P}$ the Brownian motion $\Phi$ may  have a different drift,
 $$\d\Phi(t) = r \,\d t + \sigma \, \d W(t). $$
 We assume that $\Phi, N_1, N_2, \ldots $ are stochastically independent under $\mathbb{P}$.
The process $\d N_j(t)$ has the $\mathbb{P}$-compensator
\begin{align*}
  \d C_j(t) = I_j(s-)\, \lambda_j(s) \,\d s
\end{align*}
with respect to the natural filtration of $X$.
Let $$ \mathcal{T}= \{\tau \in \mathbb{T}: \tau \textrm{ is continuous} \} $$
 as in the previous example.
\begin{proposition}\label{PropositionExp3}
The ISU decomposition scheme $\delta^{ISU}$ exists  and is stable at $X$. It holds that
\begin{align}\label{Exp3D1}
\delta^{ISU}_1(X)(t) & =  \sum_{j} b_j  \int_{(0,t\wedge T]}  \frac{I_j(s)  \, \mathrm{e}^{-\int_s^T \lambda_j(u)\,\d u} }{\kappa(s)\,\mathrm{e}^{  (T-s)(r-\sigma^2) }}  \,  \,\sigma\,\d W(s)\\
 & \quad  - \alpha \sum_{i,j } b_ib_j
   \int_{(0,t\wedge T]}  \Psi_{ij}(s)  \bigg((\Upsilon_{ij}(s)-1)  \,\sigma\, \d W(s)  +  \frac{\sigma^2}{2}\d s\bigg),  \nonumber \\ \label{Exp3D2}
 \delta^{ISU}_2(X)(t) & =  \sum_{j} b_j \int_{(0,t\wedge T]}  \frac{ \mathrm{e}^{-\int_s^T \lambda_j(u)\,\d u} }{\kappa(s)\,\mathrm{e}^{  (T-s)(r-\sigma^2) }}  \, \d (N_j-C_j)(s)\\
  \nonumber & \quad +  \alpha \sum_j \int_{(0,t\wedge T]}  \big(V(s)-V(s-)\big) \d N_j(s)\\
 \nonumber  & \quad  +\alpha \sum_{i,j} b_ib_j     \int_{(0,t\wedge T]}  \Psi_{ij}(s)\,\frac{ \Upsilon_{ij}(s)( 1  + \mathds{1}_{i \neq j})  -   2 }{2} \,\d C_j(s)
\end{align}
almost surely for all $t \geq 0$, where $V(s) =  \sqrt{\mathrm{Var}[A(0)-L(0)| \sigma(X^s)
]}$ and
\begin{align*}
 \Upsilon_{ij}(s) &= \mathrm{e}^{(T-s)\sigma^2}\mathrm{e}^{\mathds{1}_{i= j}\int_s^T \lambda_j(u)\,\d u},\\  \Psi_{ij}(s) &= \frac{ I_i(s)I_j(s) \,\mathrm{e}^{-\int_s^T (\lambda_i(u)+\lambda_j(u)) \, \d u}  }{ V(s) \kappa(s)^2\, \mathrm{e}^{  2(T-s)(r-\sigma^2) } } .
\end{align*}
\end{proposition}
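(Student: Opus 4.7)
The plan is to derive an explicit closed form for the revaluation surface $U(t_1,t_2)=\varrho((X_1^{t_1},X_2^{t_2}))$, compute the SU decomposition with respect to a refining partition $\pi^n$, and pass to the limit in probability. Exploiting the independence of $\Phi,N_1,N_2,\ldots$ under $\mathbb{P}$, I write $G := \sum_j b_j I_j(T)/\kappa(T)$ as a product $XY$ with $X=1/\kappa(T)$ $\Phi$-measurable and $Y=\sum_j b_j I_j(T)$ $N$-measurable, so that the conditional mean and variance factor cleanly. The Black--Scholes moments $\E[\kappa(T)^{-k}\mid \Phi^{t_1}]$ for $k=1,2$ supply the $\kappa(t_1)\,e^{(T-t_1)(r-\sigma^2)}$ and $e^{\sigma^2(T-t_1)}$ factors, while $\E[I_i(T)I_j(T)\mid N^{t_2}]$ factorizes for $i\neq j$ and collapses for $i=j$, yielding the $e^{-\int_{t_2}^T(\lambda_i+\lambda_j)}$ and $e^{\int_{t_2}^T\lambda_j}$ factors appearing in $\Psi_{ij}$ and $\Upsilon_{ij}$. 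This delivers the explicit representation $U(t_1,t_2) = U^{\mathrm{exp}}(t_1,t_2) + \alpha V(t_1,t_2)$ where $V^2$ is smooth in $t_1$ and right-continuous in $N^{t_2}$.

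Next, I reinterpret each SU summand as a one-dimensional stochastic or Stieltjes increment. Each summand of $D_1^n$ is $U(s_{l+1}\wedge t, s_l\wedge t)-U(s_l\wedge t,s_l\wedge t)$, i.e.\ the variation of $U$ in the $t_1$-direction with $t_2=s_l$ held fixed; applying one-dimensional It\^o's formula to the semimartingale $\tau\mapsto U(\tau,s_l)$ produces the Brownian and Lebesgue integrands. The $U^{\mathrm{exp}}$-part is linear in $K(t_1):=\kappa(t_1)^{-1}e^{-(T-t_1)(r-\sigma^2)}$, which satisfies $dK=-K\sigma\,dW$, producing the first line of \eqref{Exp3D1}; the $\alpha V$-part uses It\^o on $\sqrt{V^2}$ to deliver both the $\sigma\,dW$ and $\sigma^2\,ds$ integrands of the second line. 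Analogously, each summand of $D_2^n$ is the $t_2$-variation with $t_1=s_{l+1}$ fixed, and $dM_j=-e^{-\int_{t_2}^T\lambda_j}d(N_j-C_j)$ yields the compensated line of \eqref{Exp3D2}; for $\alpha V$, the jumps of $V$ at jumps of $N_j$ give the $(V(s)-V(s-))\,dN_j(s)$ term, and the continuous $t_2$-drift of $V^2$ produces the remaining $dC_j$-integral. Passing $n\to\infty$ via standard approximation theorems for stochastic and Stieltjes integrals, together with $s_l\to\tau$ and right-continuity of the integrands, yields \eqref{Exp3D1}--\eqref{Exp3D2} as limits in probability.

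Stability at $X$ then follows because the resulting integrands depend on the paths of $\Phi$ and $N$ through continuous operations only: when $X$ is replaced by $X\diamond\tau^n$ for a continuous refining delay, Proposition~\ref{tauIncresToInf}(a,b) delivers convergence of the time-reparametrised paths uniformly on compacts, and a dominated-convergence argument transfers this to the Brownian and compensator integrals. The main technical difficulty is the variance-margin piece $\alpha V$: because $V=\sqrt{V^2}$ is nonlinear, applying It\^o on $\sqrt{\cdot}$ in the $t_1$-direction and the jump formula in the $t_2$-direction requires strict positivity of $V$ (which follows from $\sigma>0$ combined with positive intensities $\lambda_j$) and careful bookkeeping of the $\Upsilon_{ij}$- and $\Psi_{ij}$-structure; everything else reduces to two-parameter It\^o and compensator calculus plus routine passage to the limit.
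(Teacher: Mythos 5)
Your proposal follows essentially the same route as the paper's own proof: compute the revaluation surface $U(t_1,t_2)$ explicitly from independence (splitting into the conditional mean and the $\alpha\sqrt{\mathrm{Var}}$ margin), apply one-dimensional It\^{o} calculus to the $t_1$-increments $U(s_{l+1},s_l)-U(s_l,s_l)$ and the $t_2$-increments $U(s_{l+1},s_{l+1})-U(s_{l+1},s_l)$ separately, pass to the ISU limit by dominated convergence, and establish stability by first deriving $\delta^{ISU}(X\diamond\tau)$ for continuous $\tau$ and then appealing to Proposition~\ref{tauIncresToInf} together with a dominated-convergence argument. Your auxiliary observations (e.g.\ $dK = -K\sigma\,dW$ for $K(t)=\kappa(t)^{-1}e^{-(T-t)(r-\sigma^2)}$, the compensated jump identity for $q_j I_j$, and the need for strict positivity of $V$ to apply It\^{o} to $\sqrt{V^2}$) are correct and consistent with the paper's computation.
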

According to  Proposition \ref{PropositionExp3} and Theorem \ref{Prop:EquivISUandANS_F}, the formulas  \eqref{Exp3D1}\&\eqref{Exp3D2}    represent
the only decomposition of $(\varrho(X^t))_{t \geq 0}$ that is additive, normalized, and  stable.
Compared to  the decomposition formulas \eqref{D1Exp1}\&\eqref{D2Exp1} of the previous example, in  \eqref{Exp3D1}\&\eqref{Exp3D2} we have further addends that stem from  the safety margin $\alpha\,\sqrt{ \mathrm{Var}[A(0)-L(0)| \sigma(X^t)   ]}$ and that can be interpreted as the contributions of $\Phi$ and $N$ to the overall safety margin.
To our knowledge, the decomposition \eqref{Exp3D1}\&\eqref{Exp3D2} is completely  new in the literature.

\subsection*{Conservative valuation by a first-order basis}

We study an example from Jetses \& Christiansen (2021).
Let  $\d \Phi^*(t) = \phi^*(t)\,\d t$ be a technical interest intensity and $\d \Lambda^*_j(t) = \lambda_j^*(t)\, \d t$ be a technical mortality intensity that constitute a  first order basis $(\Phi^*, \Lambda_j^*)$ for the $j$-th insurance policy. The  classical conservative evaluation of life insurance liabilities uses for $A(0)-L(0)$ at time $t$ the proxy
\begin{align*}
  R(t) &= \sum_{j} p_j -  \sum_{j} b_j \frac{I_j(t)  \,\mathrm{e}^{-\int_t^T \lambda^*_j(u)\, \d u}}{\kappa(t) \,\mathrm{e}^{\int_t^T \phi^*(u)\, \d u}}  \\
   & =\sum_{j} p_j -  \sum_{j}  b_j \frac{\mathcal{E}_T( -\Lambda^*_j -(\Lambda-\Lambda^*)^t)}{\mathcal{E}_T( \Phi^* +(\Phi-\Phi^*)^t)}
\end{align*}
in case of $t \leq T$ and $R(t)=R(T)$ in case of $t>T$, see e.g.~Norberg (1999).  The representation in the second line uses the so-called Dol\'{e}ans-Dade exponential  for semimartingales,  cf.~Protter (2005, section II.8).
By setting
\begin{align*}
  X=(X_1,X_2) = (\Phi-\Phi^*, (N_j-\Lambda^*_j)_{j }),
\end{align*}
the revaluation process $R$ can be represented as
\begin{align*}
  R(t) = \varrho(X^t)
  =\sum_{j} p_j -  \sum_{j} b_j \frac{\mathcal{E}_T( -\Lambda^*_j -X_{2,j}^{t \wedge T})}{\mathcal{E}_T( \Phi^* +X_1^{t \wedge T})}.
\end{align*}
We assume that $\d \Phi = \phi(t)\, \d t$.
There exists a probability measure $\mathbb{P}^*$ such that
$$ \d C^*_j(t) = I_j(t-)\, \lambda_j^*(t)\, \d t$$
is the $\mathbb{P}^*$-compensator of $\d N_j$ with respect to the natural filtration of $N_j$.
By stopping $X_1$ and $X_2$ at asynchronous time points $t_1$ and $t_2$, we obtain the revaluation surface
\begin{align*}
  U(t_1,t_2) = \varrho((X_1^{t_1},X_2^{t_2})) = \sum_{j} p_j -  \sum_{j}b_j \frac{\mathcal{E}_T( -\Lambda^*_j -X_{2,j}^{t_2 \wedge T})}{\mathcal{E}_T( \Phi^* +X_1^{t_1 \wedge T})}.
\end{align*}
We consider all kinds of delays and  set $$\mathcal{T}= \mathbb{T}.$$
The following result can be found in Jetses \& Christiansen (2021).
\begin{proposition}
The ISU decomposition $\delta^{ISU}(X)$ in the sense of Definition \ref{DefISUdecomp} exists and almost surely equals
\begin{align}\label{Exp2D1}
\delta^{ISU}_1(X)(t) &= \sum_{j} b_j\int_{(0,t\wedge T]}\frac{I_j(s) \,\mathrm{e}^{-\int_t^T \lambda^*_j(u)\, \d u}}{\kappa(s) \,\mathrm{e}^{\int_t^T \phi^*(u)\, \d u}}   (\phi(s) -\phi^*(s)) \d s,\\
  \label{Exp2D2}
 \delta^{ISU}_2(X)(t) &= \sum_{j}b_j \int_{(0,t \wedge T]} \frac{\mathrm{e}^{-\int_t^T \lambda^*_j(u)\, \d u}}{\kappa(s) \,\mathrm{e}^{\int_t^T \phi^*(u)\, \d u}}  \big( \d N_j(s) - \d C^*_j(s)\big)
\end{align}
for all $t \geq 0$.
\end{proposition}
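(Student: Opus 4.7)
The plan is to first derive a closed form for the revaluation surface $U(t_1,t_2)$ via the Dol\'{e}ans--Dade formula, then read off the SU increments on a refining partition and pass to a pathwise Stieltjes limit.

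Since $\Phi^*+X_1^{t_1\wedge T}$ is continuous of finite variation, its Dol\'{e}ans--Dade exponential coincides with the ordinary exponential, so the denominator in $U$ simplifies to $\kappa(t_1\wedge T)\exp(\int_{t_1\wedge T}^T\phi^*(u)\,\d u)$. For the numerator, on $[0,t_2\wedge T]$ we have $-\Lambda_j^*-X_{2,j}^{t_2\wedge T}=-N_j$, on $(t_2\wedge T,T]$ its differential is $-\lambda_j^*\,\d s$, and the only possible jump (a death time in $[0,t_2\wedge T]$) contributes the factor $1+(-1)=0$. Writing $g(s):=\kappa(s)^{-1}e^{-\int_s^T\phi^*}$ and $h_j(s):=I_j(s)\,e^{-\int_s^T\lambda_j^*}$, extended to be constant beyond $T$, this gives the product form $U(t_1,t_2)=\sum_j p_j-\sum_j b_j\,g(t_1\wedge T)\,h_j(t_2\wedge T)$.

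Now fix an increasing sequence of partitions $\pi^n=\{0=s_0^n<s_1^n<\cdots\}$ of $[0,\infty)$ with vanishing mesh. The SU increments decouple as
\[
\Delta_l^{(1)}=-\sum_j b_j\,h_j(s_l\wedge T)\bigl[g(s_{l+1}\wedge T)-g(s_l\wedge T)\bigr],
\]
\[
\Delta_l^{(2)}=-\sum_j b_j\,g(s_{l+1}\wedge T)\bigl[h_j(s_{l+1}\wedge T)-h_j(s_l\wedge T)\bigr],
\]
and summing over $s_l<t$ yields left- and right-endpoint Riemann--Stieltjes approximations of $-\sum_j b_j\int_{(0,t\wedge T]}h_j\,\d g$ and $-\sum_j b_j\int_{(0,t\wedge T]}g\,\d h_j$, respectively.

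To take the limits pathwise, note that $g$ is absolutely continuous with $\d g(s)=g(s)(\phi^*(s)-\phi(s))\,\d s$, so the first sum converges to $\sum_j b_j\int h_j(s)\,g(s)(\phi(s)-\phi^*(s))\,\d s$, matching \eqref{Exp2D1}. The function $h_j$ has finite variation with absolutely continuous part $e^{-\int_s^T\lambda_j^*}\,\d C_j^*(s)$ and at most one downward jump $-e^{-\int_s^T\lambda_j^*}$ at the death time; since $g$ is continuous and each path has finitely many jumps on $[0,T]$, the right-endpoint sum converges pathwise to $\int g\,e^{-\int_s^T\lambda_j^*}(\d C_j^*-\d N_j)$, yielding \eqref{Exp2D2}. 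The main technical obstacle is the Riemann--Stieltjes convergence for $D_2$, but it is handled pathwise using uniform continuity of $g$ on $[0,T]$ together with the piecewise-constant structure of $h_j$ between its finitely many jumps.
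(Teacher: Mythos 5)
Your proof is correct in substance and follows the same overall strategy as the paper (compute the revaluation surface $U$ explicitly, telescope, pass to the limit), but you make the argument more concrete and entirely pathwise. You extract the factored form $U(t_1,t_2)=\sum_j p_j-\sum_j b_j\,g(t_1\wedge T)\,h_j(t_2\wedge T)$ and observe that the SU increments are then left/right endpoint Riemann--Stieltjes sums, whereas the paper proceeds by applying It\^{o}'s formula to the increments and then invoking dominated convergence. Since in this example both $\Phi$ and $\Phi^*$ are absolutely continuous and $N_j$ is of finite variation, nothing stochastic is actually happening in the $t_1$ direction, so your elementary route is arguably cleaner: it makes transparent why the update order washes out in the limit (continuity of $g$ is all that is needed) and avoids stochastic-calculus machinery that the paper uses but does not strictly need here. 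Your Dol\'eans--Dade reduction, the identification $\d g = g(\phi^*-\phi)\,\d s$ and $\d h_j = q_j(\d C_j^*-\d N_j)$ with $q_j(s)=\mathrm{e}^{-\int_s^T\lambda_j^*}$, and the endpoint sums all check out, and the resulting processes are indeed c\`adl\`ag, adapted and vanish at $0$, so $\delta^{ISU}(X)\in\mathbb{D}_0^2$ as required.

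One small misstatement: you justify the convergence of the $D_2$ sum by the \emph{piecewise-constant} structure of $h_j$ between its jumps, but $h_j(s)=I_j(s)\,\mathrm{e}^{-\int_s^T\lambda_j^*(u)\,\d u}$ is not piecewise constant (only $I_j$ is); between jumps it is absolutely continuous. The argument you actually need, and which does work, is the one you give in the preceding sentence: $g$ is uniformly continuous on $[0,T]$ and $h_j$ has finite total variation, so $\sum_l g(s_{l+1})\big[h_j(s_{l+1})-h_j(s_l)\big]=\int g^{\pi}\,\d h_j\to\int g\,\d h_j$ because $g^{\pi}\to g$ uniformly. With that phrasing corrected, the proof is complete for the proposition as stated. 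Note, however, that the proposition you prove here only covers the existence of $\delta^{ISU}(X)$ and the two formulas; the paper's proof of Proposition \ref{OwnPropExp2} additionally handles delays $X\diamond\tau$ and stability, which is why it carries the extra generality that your argument does not need.
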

The proof of this proposition is more or less included in the proof of the following proposition.
\begin{proposition}\label{OwnPropExp2}
The ISU decomposition scheme $\delta^{ISU}$ exists and is stable at $X$.
\end{proposition}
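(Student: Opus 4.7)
The proposition has two claims: existence of the ISU decomposition $\delta^{ISU}(Y)$ for every $Y \in \mathcal{X} = \{X \diamond \tau : \tau \in \mathbb{T}\} \cup \{X\}$, and stability at $X$. My plan is to exploit the closed-form representation of $\varrho$ as a ratio of Dol\'eans-Dade exponentials in both steps, extending the reasoning behind \eqref{Exp2D1}-\eqref{Exp2D2} from $X$ to perturbed inputs.

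For \textbf{existence}, fix $Y = X \diamond \tau$ and write the revaluation surface
\begin{align*}
U^Y(t_1, t_2) = \sum_j p_j - \sum_j b_j \frac{\mathcal{E}_T(-\Lambda_j^* - Y_{2,j}^{t_2 \wedge T})}{\mathcal{E}_T(\Phi^* + Y_1^{t_1 \wedge T})}.
\end{align*}
The crucial structural property is that $U^Y$ factors: the denominator depends only on $t_1$ and the numerators only on $t_2$. Hence the telescoping SU increments in \eqref{SUdecomposition} split cleanly, with each summand in the $D_1^n$-sum equal to the $s_l \wedge t$-indexed numerators multiplied by a difference of reciprocal Dol\'eans-Dade exponentials at $s_{l+1} \wedge t$ and $s_l \wedge t$, and symmetrically for $D_2^n$. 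Pathwise these are Riemann-Stieltjes sums with cadlag integrands against finite-variation integrators, so they converge as the mesh of $\pi^n$ tends to zero to explicit Lebesgue-Stieltjes integrals. This yields formulas for $\delta^{ISU}_i(Y)$ of the same shape as \eqref{Exp2D1}-\eqref{Exp2D2}, with $\Phi$ and $N_j$ replaced by the delayed processes $\Phi \circ \tau_1$ and $N_j \circ \tau_2$.

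For \textbf{stability at $X$}, I pass to the limit in the explicit integral representation of $\delta^{ISU}(X \diamond \tau^n)$ derived in the existence step. Proposition \ref{tauIncresToInf}(a) supplies $\sup_{s \leq t} |s - \tau_i^n(s)| \to 0$, so the continuity of $\phi$, $\phi^*$, $\lambda^*_j$ and $\kappa$ delivers pointwise convergence of the integrands to those of \eqref{Exp2D1}-\eqref{Exp2D2}. The Lebesgue-absolutely-continuous part driven by $\Phi \circ \tau_1^n$ is handled by dominated convergence on $[0, t]$. For the pure-jump part driven by $N_j \circ \tau_2^n$, the right-continuity of $N_j$ together with $\tau_2^n(t-) \to t-$ ensures that for $n$ large enough the jumps of $N_j \circ \tau_2^n$ before $t$ coincide almost surely with those of $N_j$ strictly before $t$, yielding convergence at $t-$ in probability.

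The \textbf{main obstacle} is the multiplicative jump structure of $\mathcal{E}_T$: since $\mathcal{T} = \mathbb{T}$ admits discontinuous delays $\tau$, the perturbed processes $Y_1 = (\Phi - \Phi^*) \circ \tau_1$ and $Y_{2,j}$ inherit jumps, and these enter the Dol\'eans-Dade exponential via factors of the form $1 + \Delta$ rather than additively. The existence step therefore requires a careful pathwise Riemann-Stieltjes argument to ensure the SU sums converge to a limit independent of the partition sequence $\Pi$. Once existence and stability at $X$ are in hand, Proposition \ref{CorollaryPiArbitrary} automatically confirms $\Pi$-invariance of $\delta^{ISU}(X)$, consistent with the explicit form \eqref{Exp2D1}-\eqref{Exp2D2} of the preceding proposition.
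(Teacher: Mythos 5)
Your approach is essentially the same as the paper's. Both proofs proceed by (i) writing down the revaluation surface $U^{\tau}$ for the delayed risk basis $X \diamond \tau$ and exploiting the fact that it factors, with the denominator a function of $t_1$ alone and the numerators functions of $t_2$ alone; (ii) telescoping the SU increments and passing to the limit of vanishing mesh to obtain explicit Lebesgue--Stieltjes integral formulas for $\delta^{ISU}_i(X\diamond\tau)$; and (iii) verifying stability by letting $\tau^n$ increase to the identity, combining Proposition \ref{tauIncresToInf}(a) with dominated convergence. The paper phrases step (ii) as an application of It\^{o}'s formula, while you phrase it as a pathwise Riemann--Stieltjes convergence argument; since the integrators here are finite-variation processes ($\d\Phi = \phi\,\d t$ is absolutely continuous and $N_j$ is pure jump), these are the same calculation.

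Where you go beyond the paper's exposition is the ``main obstacle'' paragraph: you correctly note that because $\mathcal{T}=\mathbb{T}$ includes discontinuous delays, the perturbed process $Y_1 = (\Phi - \Phi^*)\circ\tau_1$ acquires jumps, and the Dol\'eans--Dade exponential then carries multiplicative correction factors $\prod(1+\Delta)\,\mathrm{e}^{-\Delta}$ that the paper's closed-form expression for $U^{\tau}$ tacitly drops. This is a genuine subtlety and it is good that you flag it, but you do not actually discharge it --- you state that ``a careful pathwise Riemann--Stieltjes argument'' is needed without supplying it. Two remarks would close the gap: for the \emph{existence} of $\delta^{ISU}(X\diamond\tau)$ at a fixed discontinuous $\tau$, the multiplicative corrections are bounded, depend only on $\tau_1$ (not on the partition $\pi^n$), and simply appear as an extra deterministic factor in the integrand of $\delta_1^{ISU}(X\diamond\tau)$, so convergence of the SU sums is unaffected; for \emph{stability at $X$}, since $\Phi-\Phi^*$ is absolutely continuous with bounded density on $[0,T]$, the jump sizes satisfy $|\Delta| \le \|\phi-\phi^*\|_\infty\,|\Delta\tau_1^n(s)|$ and vanish uniformly by Proposition \ref{tauIncresToInf}(a), so the correction factors converge to one and the limit agrees with \eqref{Exp2D1}--\eqref{Exp2D2}. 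With these two observations inserted, your argument is complete and matches the paper's intent.
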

According to  Proposition \ref{OwnPropExp2} and Theorem \ref{Prop:EquivISUandANS_F}, the formulas  \eqref{Exp2D1}\&\eqref{Exp2D2}   represent
the only decomposition of $(\varrho(X^t))_{t \geq 0}$ that is additive, normalized, and  stable.
The decomposition \eqref{Exp2D1}\&\eqref{Exp2D2} is popular in life insurance, cf.~Norberg (1999).  While the actuarial literature derives it in a  heuristic way, we rediscover it here by axiomatic principles. 

\section{Proofs of the results of section \ref{SectionExamples}}

\begin{proof}[Proof of Proposition \ref{LemmaExp1}] For $t > T$ it suffices to notice that  $t \mapsto \varrho(X^t)$ is constant. So in the remaining proof we focus on the time interval $[0,T]$ only.
As $\Phi$ and $N$ are independent under $\mathbb{Q}$ and since $\mathbb{Q}( N_j(T)=0 | N_j(t)=0)= \exp\{-\int_t^T \lambda^{\mathbb{Q}}_j(s) \d s \}$, we can conclude that
\begin{align*}
  \E^{\mathbb{Q}} \big[A(0)-L(0) \big|\, \Phi , N^t \big]&=  \sum_{j} p_j -  \sum_{j} b_j \frac{1}{\kappa(T)}  \E^{\mathbb{Q}} [1-N_j(T) | N_j(t)=0]\,I_j(t)\\
  &=\sum_{j} p_j -  \sum_{j} b_j \frac{1}{\kappa(T)}  \mathrm{e}^{-\int_t^T \lambda^{\mathbb{Q}}_j(s) \d s }\,I_j(t)
\end{align*}
for  $t \leq T$.  Moreover, by applying the tower property of conditional expectations and using again the  independence of $\Phi$ and $N$ under $\mathbb{Q}$, we can show that  the revaluation surface equals
\begin{align*}
  U(t_1,t_2) &= \E^{\mathbb{Q}} \big[A(0)-L(0)\, \big|\, \sigma(\Phi^{t_1} , N^{t_2})
  \big]\\
  &=\sum_{j} p_j -  \sum_{j} b_j \frac{\mathrm{e}^{ - (T-t_1)(\mu-\sigma^2) }}{\kappa(t_1)}  \mathrm{e}^{-\int_{t_2}^T \lambda^{\mathbb{Q}}_j(s) \d s }\, I_j(t_2)
\end{align*}
for $t_1, t_2 \leq T$.
Since $1/\kappa(t) = \exp\{-t \mu + t \sigma^2/2 - \sigma W^{\mathbb{Q}}(t)\}$,
by applying It\^{o}'s formula we can show that
\begin{align*}
  U(t_{k+1},t_k)-U(t_{k},t_k) =    \sum_{j} b_j I_j(t_k) \, \mathrm{e}^{-\int_{t_k}^T \lambda^{\mathbb{Q}}_j(s) \d s } \,\mathrm{e}^{ - T(\mu-\sigma^2) } \int_{(t_k,t_{k+1}]} \mathrm{e}^{ - s\, \sigma^2 /2 -  \sigma W^{\mathbb{Q}}(s)}\, \sigma\, \d W^{\mathbb{Q}}(s)
\end{align*}
for $t_k < t_{k+1}\leq T$. By applying the dominated convergence theorem for It\^{o} integrals we obtain the ISU decomposition \eqref{D1Exp1}.
Similarly, we can show that
\begin{align*}
  U(t_{k+1},t_{k+1})-U(t_{k+1},t_k) =     \sum_{j} b_j \frac{1}{\kappa(t_{k+1})}  \,\mathrm{e}^{ - (T-t_{k+1})(\mu-\sigma^2) } \int_{(t_k,t_{k+1}]}  \mathrm{e}^{-\int_s^T \lambda^{\mathbb{Q}}_j(s)}  \d (N_j-C_j)(s),
\end{align*}
which converges almost surely to \eqref{D2Exp1} according to the dominated convergence theorem for Lebesgue integrals.

For continuous delays $\tau$, analogously to above  we can show that $X \diamond \tau$ generates the mortality surface
\begin{align*}
  U^{\tau}(t_1,t_2) =\sum_{j} p_j -  \sum_{j} b_j \frac{\mathrm{e}^{ - (T-\tau_1(t_1))(\mu -\sigma^2) }}{\kappa(\tau_1(t_1))}  \mathrm{e}^{-\int_{\tau_2(t_2)}^T \lambda^{\mathbb{Q}}_j(u) \,\d u }\, I_j(\tau_2(t))
\end{align*}
for any time points $t_1, t_2 $ such that $\tau_1(t_1), \tau_2( t_2) \leq T$. Building the ISU limits with the help of the dominated convergence theorems leads to the decomposition
\begin{align}\label{D1Exp1_b}
   &\sum_{j} b_j  \int_{(0,\tau_1(t) \wedge T]}  \frac{\mathrm{e}^{ - (T-s)(\mu-\sigma^2)}}{\kappa(s)}   \, I_j\circ \tau_2\circ \tau_1^{-1}(s)  \, \mathrm{e}^{-\int_{\tau_2\circ \tau_1^{-1}(s)}^T \lambda^{\mathbb{Q}}_j(u)\, \d u} \sigma\,\d W^{\mathbb{Q}}(s),\\
  \label{D2Exp1_b}
   &\sum_{j} b_j \int_{(0,\tau_2(t) \wedge T]}  \frac{\mathrm{e}^{ - (T-\tau_1\circ \tau_2^{-1}(s))(\mu-\sigma^2) }}{\kappa\circ \tau_1 \circ \tau_2^{-1}(s)}\,   \mathrm{e}^{-\int_s^T \lambda^{\mathbb{Q}}_j(u)\, \d u}  \, \d (N_j-C_j)(s),
\end{align}
for $\tau_i^{-1}(s) := \inf\{ u \geq 0 : \tau_i(u)\geq s\}$.
The stochastic integral \eqref{D1Exp1_b} is well-defined  if we read it as an It\^{o} integral  with respect to the natural filtration of $X \diamond \tau$.
If we replace $\tau$ in \eqref{D1Exp1_b} and  \eqref{D2Exp1_b} by a refining sequence       $(\tau^n)_{n \in \mathbb{N}} \subset \mathcal{T}$   of  continuous delays  that increase to identity, then for $n \rightarrow \infty$ Proposition \ref{tauIncresToInf}(a) and the dominated convergence theorems yield  that the left-continuous versions of \eqref{D1Exp1_b} and  \eqref{D2Exp1_b}  converge in probability to the left-continuous versions of  \eqref{D1Exp1} and  \eqref{D2Exp1}.
\end{proof}

\begin{proof}[Proof of Proposition \ref{PropositionExp3}]
For $t > T$ it suffices to notice that  $t \mapsto \varrho(X^t)$ is constant. So in the remaining proof we focus on the time interval $[0,T]$ only. We may calculate the expectation part and the standard deviation part of $\varrho $  separately and add the two components afterwards. For the expectation part we just need to follow the proof of Proposition \ref{LemmaExp1}.  So the remaining proof deals with the decomposition of the standard deviation part only.  Let $\widetilde{U}$ be the revaluation surface  of the standard deviation part,
\begin{align*}
 \widetilde{U}(t_1,t_2) = \alpha \sqrt{\mathrm{Var}[A(0)-L(0)| \sigma(\Phi^{t_1} , N^{t_2})
 ] }.
\end{align*}
Let $K(s) = \exp\{  (T-s)(r-\sigma^2) \}$ and $q_j(s) = \exp\{-\int_s^T \lambda_j(u) \, \d u\}$.
Analogously to the proof of Proposition \ref{LemmaExp1} we can show that
\begin{align*}
\E \Big[  \sum_j b_j \frac{I_j(T)}{\kappa(T)} \Big|\, \Phi , N^t \Big]
  &=\sum_{j} b_j  \frac{1}{\kappa(T)} q_j(t)\,I_j(t),\\
  \E \Big[\Big(  \sum_j b_j \frac{I_j(T)}{\kappa(T)} \Big)^2 \Big|\, \Phi , N^t \Big]
  &=\sum_{j } b_j^2   \frac{1}{\kappa(T)^2} q_j(t)\,I_j(t)
  + \sum_{i,j:i \neq j } b_i b_j  \frac{1}{\kappa(T)^2} q_i(t)q_j(t)\,I_i(t)\,I_j(t),
\end{align*}
where we used that fact that $N_1, N_2, \ldots$ are independent.
So,  by applying the tower property of conditional expectations and using the  independence of $\Phi$ and $N$, we obtain that
\begin{align*}
  & \mathrm{Var}[A(0)-L(0)| \sigma(\Phi^{t_1} , N^{t_2}) 
   ] \\
  &=   \sum_{i,j } b_ib_j   \frac{K(t_1)^2}{\kappa(t_1)^2} q_i(t_2)\, \,I_i(t_2)\,I_j(t_2)\Big( \mathrm{e}^{(T-t)\sigma^2}\,q_j(t_2)^{\mathds{1}_{i\neq j}}-q_j(t_2)\Big),
\end{align*}
since  $  1/\kappa(T)^2   = \exp\{-2 (T-t) r+ 2(T-t) \sigma^2/2 - 2 \sigma (W(T)-W(t))\} /  \kappa(t_1)^2$.
By applying It\^{o}'s formula,  we get
\begin{align*}
   &\widetilde{U}(t_{k+1},t_{k} )- \widetilde{U}(t_{k},t_k )\\
   &=    -  \alpha \sum_{i,j } b_ib_j    q_i(t_k)\,I_i(t_k)\,I_j(t_k)\\
   &\qquad \times
   \int_{(t_k,t_{k+1}]}  \frac{\alpha}{ 2\widetilde{U}(s,t_{k})}  \frac{K(s)^2}{\kappa(s)^2} \bigg(\Big( \mathrm{e}^{(T-s)\sigma^2}\,q_j(t_k)^{\mathds{1}_{i\neq j}}-q_j(t_k)\Big)  2 \sigma \d W(s)  + q_j(t_k) \sigma^2 \d s\bigg)
\end{align*}
  and
\begin{align*}
   &\widetilde{U}(t_{k+1},t_{k+1} )- \widetilde{U}(t_{k+1},t_k )\\
  &=   \alpha \sum_{i,j} b_i b_j   \frac{K(t_{k+1})^2}{\kappa(t_{k+1})^2}  \int_{(t_k,t_{k+1}]} \alpha \frac{ I_i(s)I_j(s)\, q_i(s) }{ 2\,\widetilde{U}(t_{k+1},s)}  \\
  & \qquad \qquad \times  \Big( \mathrm{e}^{(T-t_{k+1})\sigma^2}(  \lambda_i(s) + \mathds{1}_{i \neq j} \lambda_j(s))q_j(s)^{\mathds{1}_{i\neq j}}     -(  \lambda_i(s) + \lambda_j(s)) q_j(s)\Big) \,\d s\\
  & \quad +   \sum_j \int_{(t_k,t_{k+1}]} \big(\widetilde{U}(t_{k+1},s)-\widetilde{U}(t_{k+1},s-)\big) \d N_j(s),
\end{align*}
using the fact that $N_1, N_2, \ldots $ almost surely have no simultaneous jumps.
By applying the dominated convergence theorem and rearranging the terms, we arrive at  the ISU decomposition  according to \eqref{Exp3D1} and \eqref{Exp3D2}.
Likewise, one can show $X \diamond \tau$ has the ISU decomposition
\begin{align*}
 \delta_1^{ISU}(X \diamond \tau)(t) & =  \sum_{j} b_j  \int_{(0,\tau_1(t)\wedge T]}  \frac{I_j\circ \sigma_{21}(s)  \, q_j\circ \sigma_{21}(s)}{ \kappa(s) K(s)}   \,  \,\sigma\,\d W(s)\\
 & \quad  - \alpha \sum_{i,j } b_ib_j
   \int_{(0,\tau_1(t)\wedge T]} \alpha  \frac{ I_i\circ\sigma_{21}(s)\,I_j\circ\sigma_{21}(s) \, q_i\circ\sigma_{21}(s)\,q_j\circ\sigma_{21}(s)  }{\kappa(s)^2K(s)^2  \widetilde{U}(s,\sigma_{21}(s)) } \\
   & \qquad \qquad\qquad \qquad  \times \bigg(\bigg(\frac{\mathrm{e}^{(T-s)\sigma^2}}{q_j\circ \sigma_{21}(s)^{\mathds{1}_{i= j}}}-1\bigg) \sigma \d W(s)  +  \frac{\sigma^2}{2}\d s\bigg),  \\
 \delta_2^{ISU}(X \diamond \tau)(t) & =  \sum_{j} b_j \int_{(0,\tau_2(t)\wedge T]}  \frac{ q_j(s)}{\kappa\circ \sigma_{12}(s)\, K\circ \sigma_{12}(s)}\,  \, \d (N_j-C_j)(s)\\
   & \quad +  \sum_j \int_{(0,\tau_2(t)\wedge T]}  \big(\widetilde{U}(\sigma_{12}(s),s)-\widetilde{U}V(\sigma_{12}(s),s-)\big) \d N_j(s)\\
  & \quad  +\alpha \sum_{i,j} b_i b_j     \int_{(0,\tau_2(t)\wedge T]}  \alpha \frac{I_i(s)I_j(s) q_i(s)q_j(s) }{ \kappa\circ \sigma_{12}(s)^2K\circ \sigma_{12}(s)^2  \widetilde{U}(\sigma_{12}(s),s)} \\
   & \qquad \qquad\qquad \qquad  \times \frac{1}{2}
   \bigg(  \bigg(\frac{\mathrm{e}^{(T-\sigma_{12}(s))\sigma^2}}{q_j(s)^{\mathds{1}_{i= j}}}-1\bigg)(  \lambda_i(s) + \mathds{1}_{i \neq j} \lambda_j(s))  -   \mathds{1}_{i = j} \lambda_j(s)\bigg) \d s
\end{align*}
for $\sigma_{12}(t) := \tau_1\circ \tau_2^{-1}$ and $\sigma_{21}:=\tau_2\circ \tau_1^{-1}$ where $\tau_i^{-1}(s) := \inf\{ u \geq 0 : \tau_i(u)\geq s\}$. If we replace $\tau$  by a refining sequence       $(\tau^n)_{n \in \mathbb{N}} \subset \mathcal{T}^c$   of  continuous delays  that increase to identity, then for $n \rightarrow \infty$ Proposition \ref{tauIncresToInf}(a) and the dominated convergence theorem verify the stability of $\delta^{ISU}$  at $X$.
\end{proof}

\begin{proof}[Proof  of Proposition \ref{OwnPropExp2}]
For $t > T$ it suffices to notice that  $t \mapsto \varrho(X^t)$ is constant. So in the remaining proof we focus on the time interval $[0,T]$ only.
The delayed risk basis $X \diamond \tau$ has the revaluation surface
\begin{align*}
  U^{\tau} (t_1,t_2) =  \sum_{j} p_j -  \sum_{j} b_j I_j\circ \tau_2(t_2) \, \mathrm{e}^{-\int_0^{\tau_2(t_2)} \phi(u)\, \d u}  \mathrm{e}^{-\int_{\tau_1(t_1)}^T \phi^*(u)\, \d u} \mathrm{e}^{-\int_{\tau_2(t_2)}^T \lambda^*_j(u)\, \d u}
\end{align*}
for all $t_1,t_2 $ with $\tau_1(t_1), \tau_2(t_2) \leq T$.
By applying It\^{o}'s formula  we can show that
\begin{align*}
   &U^{\tau}(t_{k+1},t_k )- U^{\tau}(t_{k},t_k )\\
     &=  \sum_j \int_{(\tau_1(t_k),\tau_1(t_{k+1})]} b_j\frac{I_j\circ \tau_2(t_k)}{\kappa(s)} \mathrm{e}^{-\int_t^T \phi^*(u)\, \d u}  \mathrm{e}^{-\int_{ \tau_2(t_k)}^T \lambda^*_j(u)\, \d u}(\phi(s) -\phi^*(s)) \d s.
\end{align*}
Building the ISU limits with the help of the dominated convergence theorem leads to the decomposition
\begin{align*}
   \delta_1^{ISU}(X \diamond \tau)(t)
     &=  \sum_j \int_{(0,\tau_1(t) \wedge T]} b_j\frac{I_j\circ \tau_2\circ \tau_1^{-1}(s)}{\kappa(s)} \mathrm{e}^{-\int_s^T \phi^*(u)\, \d u}  \mathrm{e}^{-\int_{ \tau_2 \circ \tau_1^{-1}(s)}^T \lambda^*_j(u)\, \d u}(\phi(s) -\phi^*(s)) \d s
\end{align*}
for $\tau_i^{-1}(s) := \inf\{ u \geq 0 : \tau_i(u)\geq s\}$. Likewise we can show that
\begin{align*}
   \delta_2^{ISU}(X \diamond \tau)(t)
     &=  \sum_j \int_{(0,\tau_2(t) \wedge T]} b_j\frac{\mathrm{e}^{-\int_{\tau_2\circ \tau_1^{-1}(s)}^T \phi^*(u)\, \d u}}{\kappa\circ \tau_1\circ \tau_2^{-1}(s)} \mathrm{e}^{-\int_{ s}^T \lambda^*_j(u)\, \d u} \big( \d N_j(s) - \d C^*_j(s)\, \d s\big).
\end{align*}
For $t_1,t_2 $ with $\tau_1(t_1), \tau_2(t_2) \leq T$ the process $t \mapsto \varrho ( (X \diamond \tau)^t$ is constant.
If we replace $\tau$  by a refining sequence       $(\tau^n)_{n \in \mathbb{N}} $   of   delays  that increase to identity, then for $n \rightarrow \infty$ Proposition \ref{tauIncresToInf}(a) and the dominated convergence theorem verify the stability of $\delta^{ISU}$  at $X$.
\end{proof}

\section{Conclusion}

In insurance practice, the currently preferred method for the change analysis of an insurer's own funds is the SU decomposition principle with yearly time steps. 
 The main disadvantage of the SU decomposition principle is its dependence on the update order of the risk factors. This observation raises the question whether there exist more convincing decomposition concepts.
In this paper we started the search for better decomposition methods from an axiomatic basis that postulates additivity, normalization and stability as desirable properties of profit and loss decompositions. Our key finding is that these three axioms  necessarily lead to ISU decompositions, provided that the latter exist and are stable themselves.
The definition of ISU decompositions implies that they can be approximated by SU decompositions. So, insurers can easily implement approximative ISU decompositions in practice.

For each of the three examples in section \ref{SectionExamples}  we verified the existence and stability of ISU decompositions  separately. This is due to the fact that our examples for  the mapping $\varrho$ are mathematically rather different. Finding general  existence and stability results is a challenge for future research. Our three examples indicate that our axiomatic concept might work for a large variety of revaluation mappings $\varrho$. So this paper is rather the starting point of a research journey than its end point.

\section*{References}

\bigskip {\small
\begin{list}{}{\leftmargin1cm\itemindent-1cm\itemsep0cm}
\item{Bashford, T., Dubischar, D., 2020. Die Ver\"{a}nderungsanalyse (AoC) in
einem konsistenten Ansatz. Presentation at Herbsttagung 2020 of the German Actuarial Association, available on www.actuview.com.}

\item{Biewen, M., 2014. A general decomposition formula with interaction effects. Applied Economics Letters, 21(9), 636-642.}

\item{Candland, A., Lotz, C., 2014. Profit and loss attribution. In: Internal models and Solvency II - From regulation to implementation, edited by Paolo Cadoni. Risk Books. }

\item{ CFO Forum, 2016.  Presentation of analysis of earnings. Market Consistent Embedded Value Principles, Appendix B.}

\item{European Union, 2015. Excess of Assets over Liabilities — explained by technical provisions.  Official Journal of the European Union, L 347/386.}

\item{F\"{o}llmer, H.~ and Schied, A., 2016. Stochastic Finance. De Gruyter, 4th edition.}

\item{Fortin, N., Lemieux, T., Firpo, S., 2011. Chapter 1-decomposition methods in economics. Volume 4, Part A of Handbook of Labor Economics. Elsevier 10, S0169-7218.}

\item{International Accounting Standards Board, 2017. Movements in insurance contract liabilities analysed by components.  IFRS 17 Insurance Contracts, Table 3.}

\item{Jetses, J.~and Christiansen, M., 2021.  A General Surplus Decomposition Principle in
Life Insurance. arXiv:2111.12967v1.}

\item{Norberg, R., 1999. A theory of bonus in life insurance. Finance and Stochastics 3/4, 373-390.}

\item{Schilling, K., Bauer, D., Christiansen, M.C., Kling, A., 2020. Decomposing Dynamic Risks into Risk Components. Management Science 66/12, 5485-6064.}

\item{Sheldon, T.J.~and Smith, A.D., 2004. Market consistent valuation of life assurance business, British Actuarial Journal 10(3), 543-626.}

\item{Shorrocks, A.F., 2013. Decomposition procedures for distributional analysis: a unified framework based on the Shapley value. Journal of Economic Inequality 11(1), 99-126.}

\end{list}}

\end{document}